\documentclass[12pt]{article}
\usepackage{color}
\usepackage{graphicx}
\usepackage{amsfonts}
\usepackage{latexsym,amssymb,amsmath,pslatex}
\usepackage{url}
\usepackage{authblk}
\textwidth=6.2in
\textheight=8.6in
\hoffset=-.3in
\pdfpageheight = 11in
\voffset=-.5in

\newtheorem{definition}{Definition}

\newtheorem{lemma}{Lemma}
\newtheorem{theorem}{Theorem}
\newtheorem{corollary}{Corollary}
\newtheorem{claim}{Claim}

\newenvironment{proof}{\noindent{\bf Proof.}}{\hfill \qed \vskip 5pt}
\def\qed{\hfill\rule{2mm}{2mm}}

\newcommand{\la}{\leftarrow}
\pagestyle{plain} 
\begin{document}

\title{Optimal online and offline algorithms for robot-assisted restoration of barrier coverage
\thanks{This work was partially supported by NSERC grants}}
\author[1]{J. Czyzowicz}
\author[2]{E. Kranakis} 
\author[3]{D. Krizanc}
\author[4]{L. Narayanan}
\author[4]{J. Opatrny}
\affil[1]{{\small D\'{e}p. d'informatique, Universit\'{e} du Qu\'{e}bec en 
Outaouais, Gatineau, QC, Canada, {\tt Jurek.Czyzowicz@uqo.ca}}} 
\affil[2]{{\small School of Computer Science, Carleton University, Ottawa, ON, 
Canada {\tt kranakis@scs.carleton.ca}}}
\affil[3]{{\small Dept. of Math. and Computer Science,
Wesleyan University, Middletown, CT, USA {\tt dkrizanc@wesleyan.edu}}}
\affil[4]{{\small Dept. of Comp. Science and Software Eng.,
Concordia University, Montreal, QC, Canada {\tt lata,$\;\;$opatrny@cs.concordia.ca}}}

\date{}                      

\maketitle

\begin{abstract}
Cooperation between mobile robots and wireless sensor networks is a line 
of research that is currently attracting a lot of attention.  
In this context, we study the following problem of barrier coverage by 
stationary wireless sensors that are assisted by a mobile robot with the 
capacity to move sensors. Assume that 
$n$ sensors are initially arbitrarily distributed on a line segment barrier. 
Each sensor is said to cover the portion of the barrier that intersects with 
its sensing area. Owing to incorrect initial position, or the death of some of 
the sensors, the barrier is not completely covered  by the sensors. We  
employ a mobile robot to move the sensors to final positions on the barrier 
such that barrier coverage is guaranteed. We seek algorithms that minimize 
the length of the robot's trajectory, since this allows the restoration of
barrier coverage as soon as possible. 
We give an optimal linear-time offline algorithm that gives a 
minimum-length trajectory for a robot that starts at one end of the barrier and achieves the restoration of barrier coverage. 
We also study two different online models: one in which the online robot does 
not know  the length of the barrier in advance, and the other in which the 
online robot knows the length of the barrier. 
For the case when the online robot does not know the length of the barrier, 
we prove a tight bound of $3/2$ on the competitive ratio, and 
we give a tight lower bound of $5/4$ on the competitive ratio in the 
other case. Thus for each case we give an optimal online algorithm. 
\end{abstract}


\section{Introduction}

Mobile robots and wireless sensor networks are related areas of research  that have largely been studied by different communities of researchers. Recently, there has been increasing interest in the possibilities uncovered by utilizing {\em both}  technologies \cite{Kouba14}: what if mobile robots and wireless sensors could {\em cooperate} to solve problems and perform tasks? 
Environments where autonomous networked entities such as robots and sensors 
cooperate to achieve a common goal are sometimes called 
{\em mixed-mode environments} and have been the subject of 
several recent research events, e.g.,  \cite{robosense13,GkmM12}. 

In this paper, we study a related mixed-mode problem for barrier coverage. Assume $n$ stationary sensors have initial positions on a line segment barrier. 
Owing to incorrect initial placement, or the death of some sensors due to 
battery failure or a disaster, the barrier is not completely covered by the 
sensors. An illustration is given in Figure~\ref{fig:ex}(a), where the segment of the 
barrier covered by a sensor is represented as a box. 
\begin{figure}[!htb]
\begin{center}
\includegraphics[width=11cm]{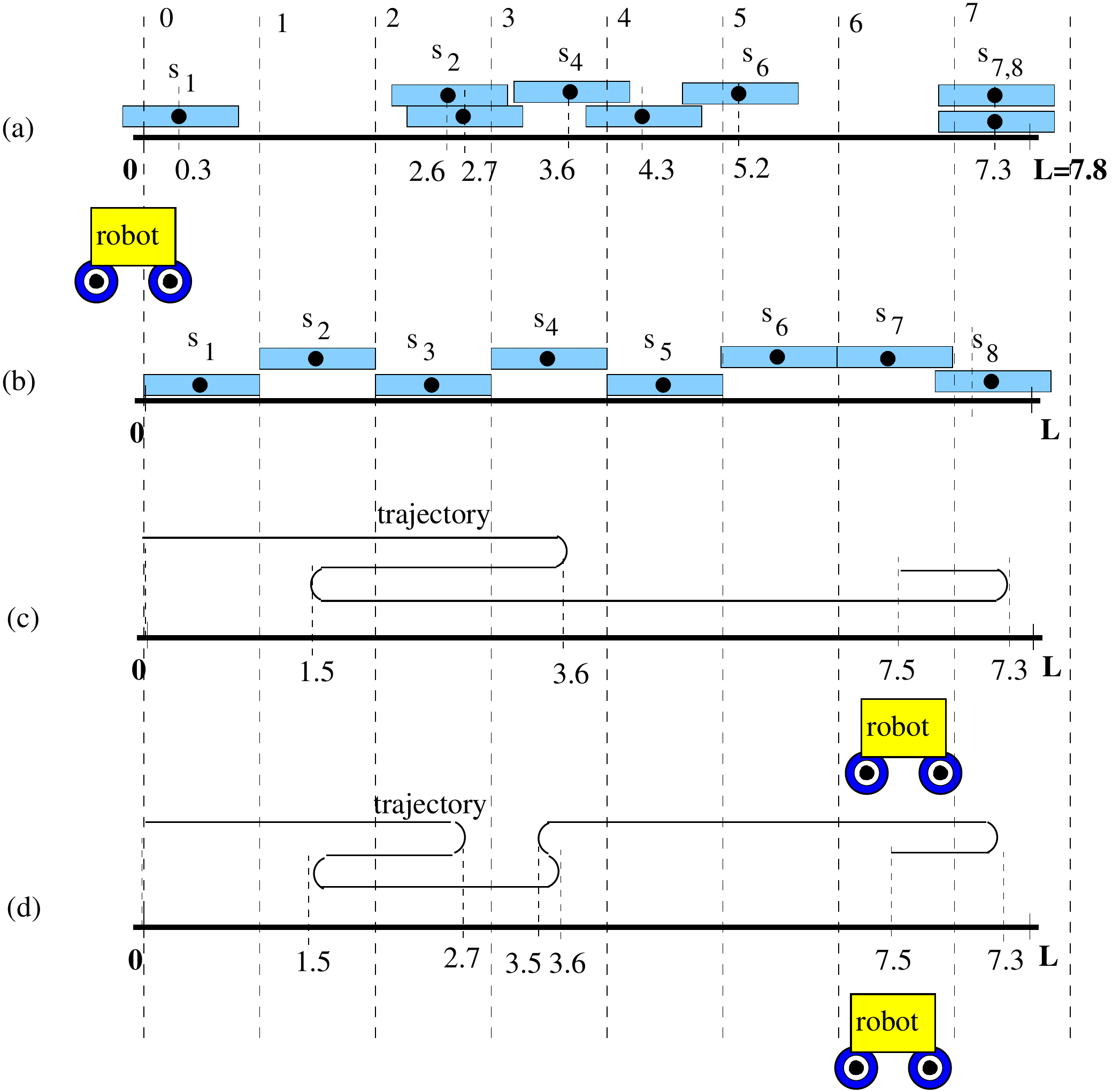}
\end{center}
\caption{Robot-assisted restoration of barrier coverage 
problem with sensor range equal to 0.5: 
(a) the initial configuration on segment $[0,L]$ with gaps in coverage, (b) a
possible solution, (c) and (d) give examples of trajectories 
that could be followed by the robot to obtain the final configuration in (b).}
\label{fig:ex}
\end{figure}

The task of the mobile robot is to walk along the barrier segment, 
pick up and move sensors to final positions such that barrier coverage 
is restored, {\em i.e.}, in their final positions, the sensors collectively cover the entire line segment barrier as in Figure \ref{fig:ex}(b). Note that the final positions that achieve barrier coverage are not unique. 
Since sensors may need to be moved in different directions, i.e. left or right,  to assure coverage, the robot may sometimes need to turn or change direction 
in order to restore coverage. 
The robot may decide to resolve the gap as soon as possible, as late as 
possible, or some time in between. 
The robot thus follows a certain {\em trajectory}, which can be specified by 
the starting point, and a sequence of points where the robot alternately turns 
left and right before it reaches its termination point. Given the initial configuration 
of Figure~\ref{fig:ex}(a), two of the possible trajectories that achieve the same final 
positions of sensors are shown in Figures \ref{fig:ex}(c) and (d).  
The time needed to restore  barrier coverage  is clearly related to the length of the 
robot's trajectory, which in turn depends on the knowledge it has of the initial 
positions of sensors. 
The problem we are interested in is finding an {\em optimal} trajectory for 
the mobile robot in order to achieve barrier coverage as fast as possible. 

\noindent
{\bf Sensor relocation model.}
In the sequel we define the capabilities of the sensors and the robot, as well as the trajectory of the robot. 

\noindent
{\em Sensors. }
Assume that $n$ {\em sensors} $s_1,s_2,\ldots,s_n$ 
are distributed on the line segment $[0, L]$ of length $L$ with
endpoints $0$ and $L$ in locations $x_1\leq x_2\leq \ldots\leq x_n$. 
The range of all sensors is assumed to be identical, and is equal to 
a positive real number $r>0$. 
Thus sensor $s_i$ in position $x_i$ 
defines a closed interval $[x_i-r, x_i+r]$ of length $2r$ 
centered at the current position $x_i$ of the sensor, in which it can detect 
an intruding object or an event of interest. See Figure~\ref{fig:ex}(a) for an illustration of a problem instance. 
We say that the sensor
{\em covers} the closed interval $[x_i-r, x_i+r]$.
We assume that the total range of the sensors is sufficient to cover the
entire line segment $[0, L]$, i.e., $2rn \geq L$. 
We define a {\em gap} to be a closed subinterval $G$ of $[0,L]$ such that
no point in $G$ is within the range of a sensor. Clearly,
an initial placement of the sensors  may have gaps.
The sensors provide {\em complete coverage} of  $[0,L]$ if they leave no gaps. 

\noindent
{\em Robot.} 
There is a {\em mobile robot} that can move the sensors to positions that
guarantee coverage of the entire line segment. 
We assume that the robot can {\em pick, carry, move} and {\em drop/deposit} sensors
from any initial position to any desired position on the line segment.
There is no constraint on the direction and number of
turns it can take (left or right) so as to pick and/or drop sensors, and no 
restriction on where in the line segment it can drop the sensors. 
We study the case when sensors are small enough and thus the robot 
can potentially carry all the sensors it needs at the same time. 

\noindent
{\em Robot trajectory and length.}
 Our goal is to provide offline and online algorithms so as to minimize the time taken 
to restore barrier coverage. Assuming constant speed, we measure this  by the 
distance 
travelled by the robot from its starting position to complete  
the task of moving sensors to positions which guarantee complete coverage of the barrier.  
We assume that the mobile robot starts at position $0$ and moves to the right. 
At some point it can turn and move left, then again turn and move right and 
so on. Thus its trajectory can be specified as a sequence of points on the 
barrier: $[t_0=0, t_1,t_2, \ldots, t_m]$, where the points $t_1,t_3,\ldots$ 
are the points where the robot turns left,  
the points $t_2,t_4,\ldots$  are the points where the robot turns right, 
and finally,  the point $t_m$ is the {\em termination point} of the 
trajectory. Therefore, $t_i>t_{i-1}$ for 
all odd $i$ while $t_i < t_{i-1}$ for all even $i$ where $0 < i \leq m$, and
the robot's trajectory is  the sequence of line segments 
$[0, t_1], [t_1, t_2], \cdots, [t_{m-1}, t_m]$.
The length of the trajectory is defined as $\Sigma_{i=1}^m |t_i - t_{i-1}|$.

We seek  algorithms that calculate  an {\em optimal} trajectory
for the mobile robot that ensure barrier coverage, i.e. a trajectory of 
smallest possible length. A mobile robot using an {\em offline} algorithm to
calculate its trajectory is 
assumed to know all the initial positions of sensors before starting its 
trajectory. On the other hand, a robot using an  {\em online} algorithm  
knows about sensors only in the parts of the barrier segment where it has 
already travelled. Specifically, an online robot discovers the presence or absence of a 
sensor at position $x$  only when reaching $x$. Therefore, at the start of 
the algorithm, such a robot has no knowledge about any of the 
sensors' positions. It can of course
remember any sensors that it has seen previously. 

\noindent
{\bf Related work.}
Barrier coverage using wireless sensors has been the subject of intensive 
research in the last decade 
\cite{barriercoverageNodeDegree,barriercoverage05,barrierCoveragePlane}. 
Some papers assumed randomized deployment of sensors on the barrier and 
analyzed the probability of barrier coverage. Other papers have studied the 
case of relocatable sensors  \cite{SC10,Teng07}, which 
start at arbitrary positions and can move to final positions that achieve 
barrier coverage. Centralized algorithms for minimizing the maximum and 
average movement of sensors were studied in \cite{swat2012,adhocnow2009} and \cite{adhocnow2010} respectively. 
Multiple barriers were studied in 
\cite{tcs2009},
and distributed algorithms for barrier coverage were given for the first time in \cite{HesariKKPNOS13}.

Charikar~et~al.~\cite{charikar2001algorithms}
consider the $k$-delivery TSP problem for transporting efficiently
$n$ identical objects, placed at arbitrary initial locations, 
to $n$ target locations with a vehicle that can carry at 
most $k$ objects at a time. 
Chalopin~et~al.~\cite{wid2013} provide hardness results, exact, approximation, and 
resource-augmented algorithms for the problem of whether there is a schedule of 
agents' movements 
that collaboratively deliver data from 
specified sources of a network to a central repository. Our problem differs both in being uncapacitated, 
and in the fact that the locations 
of the sources and targets are not known in advance.

Online vehicle routing problems and the online travelling salesman problem  have been studied previously; see  \cite{jaillet-wagner} for a survey. 
Our problem and our conception of online are quite different: the locations the 
robot needs to deposit sensors are not pre-determined, and we assume an 
online robot discovers the positions of sensors as it moves along the barrier. 

Cooperation between mobile robots and wireless sensors is a relatively new 
research area and has been explored in several research events in the last 
couple of years \cite{GkmM12,Kouba14,robosense12,robosense13}. The authors 
of  \cite{DSKZ06} and \cite{SSL07} use information obtained from wireless sensors 
for the problem of localization of a mobile robot.  In \cite{JS01}, mobile 
robots and stationary sensors cooperate in a target tracking problem: 
stationary sensors track moving sensors in their sensor range, while mobile 
robots explore regions not covered by the fixed sensors. A common evaluation 
platform for mixed-mode environments incorporating both mobile robots and 
wireless sensor networks is described in \cite{Kropff08}.

\noindent{\bf Our results.}
We give a linear time offline algorithm that computes an optimal trajectory 
for a robot starting at an endpoint of the barrier to restore barrier coverage. 
For the online case, we show that when the robot does not know the length of 
the barrier and recognizes the end of it only when reaching it, 
any algorithm must have a competitive ratio of at least $3/2$. 
We give a simple algorithm that matches this bound. When the robot does know 
the length of the barrier, we show a lower bound  of $5/4$ on the 
competitive ratio of any online algorithm for the problem. We then give an 
adaptive online algorithm whose competitive ratio matches this lower bound. 

\section{Optimal Offline Algorithm}

In our offline algorithm we assume that the robot has global 
knowledge of the positions of sensors on the line segment, and  
that during the course of its
movement, can pick up and carry as many sensors as necessary
and deposit them as required. All sensors have  
identical range denoted by $r$, and the robot starts at the endpoint $0$ of
the interval $[0,L]$, and the number of given sensors is sufficient to cover 
the given interval.  

Obviously, when the barrier does not contain any gap, the trajectory 
is empty and we consider below instances containing gaps.
We begin by establishing the properties of 
optimal non-empty trajectories of the robot, 
which are crucial to the development of the algorithm. 
We say that a solution is  {\em order-preserving} if the final order
of the position of the sensors is the same as their initial positions. 
Secondly, a solution is called {\em fully stretched} if the robot 
places all sensors in {\em attached positions},  
i.e., two consecutive sensors encountered 
by the robot are placed at distance $2r$ 
and the first sensor is at distance $r$ from $0$, except possibly the sensors 
at or after the termination point $t_m$ as in the example in 
Figure~\ref{fig:ex}(b). 

\begin{lemma}{\em (Order-preserving fully stretched solution)}
\label{lm:op}
There exists an optimal trajectory for the robot that produces an 
order-preserving fully-stretched solution.
\end{lemma}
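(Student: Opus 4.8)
The plan is to start from an arbitrary optimal trajectory $T$ together with the final configuration it produces, and to normalize that configuration into an order-preserving, fully-stretched one by two successive exchange arguments, each of which leaves the coverage intact and does not increase the trajectory length. I will rely on two elementary observations about the model. First, since the robot's path is a continuous curve that starts at $0$ and turns finitely often, the set of points it physically visits is exactly the interval $[0,R]$, where $R=\max_i t_i$ is its rightmost reach; in particular a sensor can be deposited at a point $p$ only if $p\le R$. Second, because all sensors have the same range $r$, the coverage guaranteed by a final configuration depends only on the multiset of final positions, not on which sensor sits in which position.

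The order-preserving property comes essentially for free. Fix an optimal trajectory, described as the robot's sequence of pick-up and drop actions at points of $[0,R]$, and let $Y$ be the multiset of final positions it produces. By the second observation, the physical actions of the robot, and hence the trajectory and its length, do not change if we relabel which sensor is regarded as occupying which final position. I would therefore assign the sensor with the $i$-th smallest initial position to the $i$-th smallest element of $Y$; this yields an order-preserving solution realized by the very same trajectory $T$.

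For the fully-stretched property I would argue by a compaction (exchange) step applied to the sensors that $T$ actually relocates. Let these repositioned sensors have final positions $p_1<p_2<\cdots<p_j$, and replace them by the attached positions $a_i=(2i-1)r$, leaving every untouched sensor where it is. Coverage is preserved by a maximal-reach observation: $j$ sensors that together cover $0$ and leave no internal gap can extend no further right than $2jr$, and the attached placement attains exactly this bound, covering $[0,2jr]$ with no gaps; hence if the repositioned sensors in $T$ connected to the first untouched sensor on their right, the attached placement, whose reach is at least as large, still connects to it, so no gap is created. It remains to exhibit a trajectory for the new configuration of length at most that of $T$, which I would do by moving one deposit at a time toward its attached position, reusing the legs of $T$ and rescheduling the drops so that at every instant the number of sensors dropped so far never exceeds the number picked up so far.

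The step I expect to be the main obstacle is this last one: showing that the total length $\sum_i|t_i-t_{i-1}|$, which is the total variation of the path and not merely a function of the individual deposit positions, does not grow under compaction. The delicate point is that stretching can move some deposits \emph{rightward} (the attached layout spreads the repositioned sensors as far apart as a gap-free arrangement allows), so a crude ``every drop moves left'' argument fails. I would control this by choosing the repositioned set minimally --- using only as many attached slots as are needed to bridge to the leftmost untouched sensor --- and then showing that the rightmost attached slot never lies beyond the rightmost point the robot must already reach, since that point is forced either by a pick-up or by the need to cover, in $T$, the same portion of the barrier. With the reach thus pinned down, an inductive case analysis on whether a given deposit falls on a rightward or a leftward leg of $T$ should show that each single compaction step can be absorbed into the existing path without lengthening any leg, and the maximal-reach observation guarantees that the drop schedule remains feasible throughout.
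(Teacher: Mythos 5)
Your overall strategy --- normalize an arbitrary optimal trajectory by exchange arguments that exploit the interchangeability of identical sensors --- is the same as the paper's, and you correctly isolate the one genuinely delicate point (that ``stretching'' moves drops \emph{rightward}, since the attached positions $(2i-1)r$ are the rightmost coverage-feasible positions, so length-non-increase is not automatic). However, as written the proposal has two gaps. The smaller one is in the order-preserving step: asserting that the sorted assignment is ``realized by the very same trajectory'' is not pure bookkeeping, because a drop can only be assigned a sensor that has already been picked up at that instant. The sorted matching must be shown to respect the temporal order of the pick-up and drop events of $T$; the paper does this with a two-case swap argument on a single inversion (either $s_j$ is encountered before $s_i$ is placed, or the path must later recross the final position of $s_i$, where the two sensors can be exchanged). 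Some such argument is needed; ``relabelling'' alone does not guarantee feasibility.

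The larger gap is that the fully-stretched step is a plan rather than a proof: the crucial claim that compaction can be absorbed into the existing path without lengthening it is exactly the part you defer (``an inductive case analysis \dots should show''). In addition, the construction you describe attaches only the sensors that $T$ relocates while freezing all others, whereas the lemma's fully-stretched configuration places \emph{every} sensor encountered before the termination point at $(2i-1)r$; and your maximal-reach coverage argument tacitly assumes the relocated sensors form a prefix lying entirely to the left of the first untouched sensor, which need not hold (an untouched sensor may sit between two relocated ones, and then sliding the relocated ones to $r,3r,\dots$ can leapfrog it or open a gap around it). The paper's (admittedly terse) argument avoids both problems by working with \emph{all} encountered sensors and by rescheduling drops along the unchanged path: once order is preserved, the robot can carry each sensor it meets and release it exactly when it is at distance $2r$ from the previously placed one (or at $r$ from the origin), which is possible because the trajectory must in any case extend at least $2r$ past the previous sensor to achieve coverage. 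To complete your proof you would need either to carry out the deferred induction in full, handling interleaved untouched sensors, or to switch to this rescheduling-within-the-same-path viewpoint.
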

\begin{proof}
Let $s_i$ and $s_j$ be two sensors such that $x_i<x_j$ and assume that 
the robot, when following an optimal trajectory places $s_i$ to the right of
$s_j$. Clearly, on any optimal trajectory the robot encounters  $s_i$ before 
$s_j$.
If on the trajectory of the robot, the placement of  $s_i$ occurs after 
 the robot encounters $s_j$, then the robot can pick $s_j$ when 
traversing it and reverse the placements of  $s_i$ and $s_j$.
 If on the trajectory of the robot, the placement of  $s_i$ occurs before 
the robot encounters $s_j$ then the robot makes a left turn after 
encountering $s_j$ and the trajectory of the robot must cross the final 
position of $s_i$. When crossing $s_i$, the robot can replace $s_i$ by $s_j$,
and place later  $s_i$ in place of $s_j$ using the same trajectory. 
Since the sensors have identical ranges, this exchange of positions of 
$s_i$ and $s_j$ gives the same barrier coverage.

When following an order  preserving trajectory, the robot can 
pick any sensor it encounters and delay the drop of a sensor, or drop it
earlier so that the distance to the beginning of the segment is $r$ or 
to the preceding sensor is $2r$. This can be done as long as the span of the 
trajectory extends $2r$ past the previous sensor. 
\end{proof}
\begin{lemma}{\em (Three Visits Lemma)}
\label{lm2}
The trajectory of an optimal algorithm does not contain  
the same point of the line segment more than three times. 
Furthermore, the last point of the trajectory can occur in the trajectory 
at most twice.  
\end{lemma}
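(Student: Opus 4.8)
The plan is to work with the description of the trajectory as the alternating sequence of right-moving segments $[t_{2k-2},t_{2k-1}]$ and left-moving segments $[t_{2k},t_{2k-1}]$, and to observe that the number of times a point $p$ is visited equals the number of these segments that contain $p$. Because the robot starts at $0$ and the segments alternate in orientation, the successive crossings of any fixed generic point $p$ alternate in direction, beginning with a rightward crossing; hence ``visited at least four times'' means exactly that $p$ is crossed in the pattern right, left, right, left. I would therefore aim to show that such a pattern can never occur on an optimal trajectory.

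The core of the argument is an exchange (merging) argument. Suppose for contradiction that some $p$ is crossed at least four times, and isolate two rightward excursions across $p$ separated by a complete leftward retreat past $p$: a first excursion that crosses $p$ rightward, travels to a rightmost point $M_1>p$ and returns across $p$ leftward, and then, after some work to the left of $p$, a second excursion that again crosses $p$ rightward, reaches $M_3>p$, and returns leftward across $p$. I would then replace these by a single rightward excursion that goes straight to $\max(M_1,M_3)$. Invoking Lemma~\ref{lm:op}, I may assume the solution is order-preserving and that the robot is free to pick up any sensor it passes and to defer or advance its drop; consequently every sensor the original trajectory collected on either excursion (all lying in $[p,\max(M_1,M_3)]$) is available on the single merged excursion, and every sensor that had to be delivered to the left of $p$ can be carried back on one consolidated leftward return and dropped into the same fully-stretched slots. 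This keeps coverage intact while reducing the crossings of $p$ from four to two. The length strictly decreases: the cost of the two rightward round trips across $p$ drops from at least $2(M_1-p)+2(M_3-p)$ to $2(\max(M_1,M_3)-p)$, a saving of at least $2\min(M_1-p,M_3-p)>0$, while the internal detours above $p$ are preserved and the consolidated leftward work costs no more than the original two leftward phases. This contradicts optimality.

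For the second statement I would apply the same idea at the end of the trajectory. A third visit to the termination point $t_m$ would mean that the robot reaches the location $t_m$, leaves it, and later returns to it to stop; since no work follows the final arrival, the detour that carried the robot away from $t_m$ and back serves only to place sensors that could equally well have been placed on the earlier pass over $t_m$. Removing that out-and-back and terminating at the first suitable visit strictly shortens the trajectory, so on an optimal trajectory $t_m$ is visited at most twice.

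The step I expect to be the main obstacle is the bookkeeping that makes the merge rigorous: one must check that consolidating the two rightward excursions, with the case split on whether $M_1\ge M_3$ or $M_1<M_3$, neither spoils coverage nor creates a new over-visited point elsewhere, and that the sensors required for the deferred leftward delivery are genuinely reachable on the merged excursion. The order-preserving, fully-stretched normal form of Lemma~\ref{lm:op} is precisely what lets me reassign which physical sensor fills which slot, so the difficulty is organizational rather than geometric; nonetheless it must be carried out carefully to convert the ``clearly wasteful'' intuition into an honest strict decrease in total length.
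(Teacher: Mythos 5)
Your overall strategy is the paper's: an exchange argument based on the ``early pick-up, late drop'' principle that consolidates redundant crossings of an over-visited point $p$ into fewer passes. However, the way you localize the surgery leaves a genuine gap in the sensor bookkeeping, and it is precisely the gap the paper's construction is built to avoid. You replace only the two rightward excursions past $p$ by a single excursion to $\max(M_1,M_3)$, and you justify feasibility by asserting that every sensor the original trajectory ``collected on either excursion'' lies in $[p,\max(M_1,M_3)]$. But what matters is the set of sensors \emph{deposited} to the right of $p$ on those excursions, and these need not have been collected there: in the original trajectory the robot may pick up a sensor at some $q<p$ during the leftward interlude between the two excursions and ferry it across $p$ on the \emph{second} rightward excursion. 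Your merged excursion takes place at the time of the original first crossing of $p$, before that sensor has been touched, so it is not in hand. Lemma~\ref{lm:op} lets you permute which physical sensor fills which slot, but it does not conjure a sensor from the left of $p$ onto a rightward sweep that has already happened. A second, related loose end is the reconnection: after deleting the second excursion you must splice the leftward interlude together with whatever the trajectory does after the fourth visit of $p$ (which begins at $p$ moving left) without paying an extra round trip back up to $p$, and your ``one consolidated leftward return'' does not say how far left it goes or how it rejoins the remainder.

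The paper closes both holes with one move: it takes $T_1$ to be the whole span from the first to the \emph{last} visit of $p$, lets $a$ and $b$ be the rightmost and leftmost points reached on that span, and prepends the approach segment $T_2$ from $b$ to the first visit of $p$. The replacement is then the canonical three-pass sweep $b\to a\to b\to p$: the outbound pass picks up \emph{everything} in $[b,a]$ (so any sensor the original trajectory ferried across $p$ in either direction is available, since all deposits on $[b,a]$ during $T_2\cup T_1$ come from sensors carried into $T_2$ at $b$ or found in $[b,a]$), and the return pass drops them in order-preserving fully-stretched positions, with the final leg $b\to p$ handing control back to the untouched suffix with at least as many sensors in hand. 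Your argument becomes correct once you make the same extension --- pick up early on the approach to $p$ (the robot necessarily traverses all of $[0,p]$ before its first visit) and consolidate all work on $[b,a]$ into one out-and-back --- but as written the claim that the merged excursion has the sensors it needs is unsupported. Your treatment of the second statement (the termination point) has the right idea but the same imprecision: one cannot simply ``remove the out-and-back,'' since it performs deliveries; rather one keeps its outbound half and terminates there, which is the paper's ``end at $b$'' step.
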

\begin{proof}
The idea is to demonstrate that a robot visiting a point in the line 
segment more than three times produces a non-optimal trajectory. 
This is based on the {\em early pick-up, late drop} principle,
i.e. we show that if a point $p$ is visited more than three times, we can
replace a part of a trajectory with a new, shorter  
trajectory in which pick-ups of sensors are done as early as
possible,  and drops of sensors  
are delayed to after the picking in the affected part is finished.

Assume on the contrary that a point, say $p$, on the line segment
is visited by the robot at least $k$ times, $k \geq 4$, during its optimal
trajectory $T$, see Figure \ref{fig:fg1}. 
Consider  trajectory $T_1$, a part of  trajectory $T$, 
that the robot follows from the first to the last 
visit of  $p$. Let $a$ be the rightmost point and 
$b$ be the leftmost  point of the segment $[0,L]$ occurring on $T_1$. Let $T_2$ 
be a part of trajectory $T$ from $b$  to $p$ prior to 
reaching $p$ for the first time.  Part 
$T_2$ must exist since the trajectory starts 
at $0$. Clearly, all sensors that 
the robot deposits between $b$ and $a$ when following $T$ up to the 
last visit of $p$ are from among the sensors that the 
robot carried when entering $T_2$ and those  sensors the robot 
found on the segment between $b$ and $a$.  Thus the robot  can achieve 
the same result by replacing $T_2$ and $T_1$ by  trajectory $T_3$
consisting of three  parts:
Part one of $T_3$ is segment $[b,a]$, and the robot when moving 
from $b$ to $a$ picks all sensors found there.
(This ``early pick'' allows the robot to have all sensors needed later on),
Part two is segment $[a,b]$ and the robot when moving
from $a$ to $b$ drops the sensors in the same order as 
achieved by $T_1$ and $T_2$ (the late drop part).
The third part is segment $[b,p]$ moving  from $b$ to $p$, where the robot does 
no picks or drops  of sensors, see Figure \ref{fig:fg2} (a). 

When following $T_3$, the robot  picked all sensors found between $b$ and $a$
and dropped the same sensors as when following $T_2$ and $T_1$. Thus,
when it reaches $p$ for the third time, it carries at least  the same or
more sensors than when reaching $p$ along  $T$ for the last time. 
Thus the robot then can continue along  the rest of $T$ and achieve 
the same result, while visiting $p$ only three times. 
This trajectory is shorter, a contradiction.   

If $T$ ends at $p$, then the third part $[b,p]$ of $T_3$ is not needed,
since the robot is doing nothing, and the trajectory should end at $b$. 
This implies that the last point of an 
optimal trajectory can be visited at most twice. The case when trajectory 
$T_3$ ends at $b$ is shown in  Figure~\ref{fig:fg2} (b).
\end{proof}
\begin{figure}[!htb]
\begin{center}
\includegraphics[width=6cm]{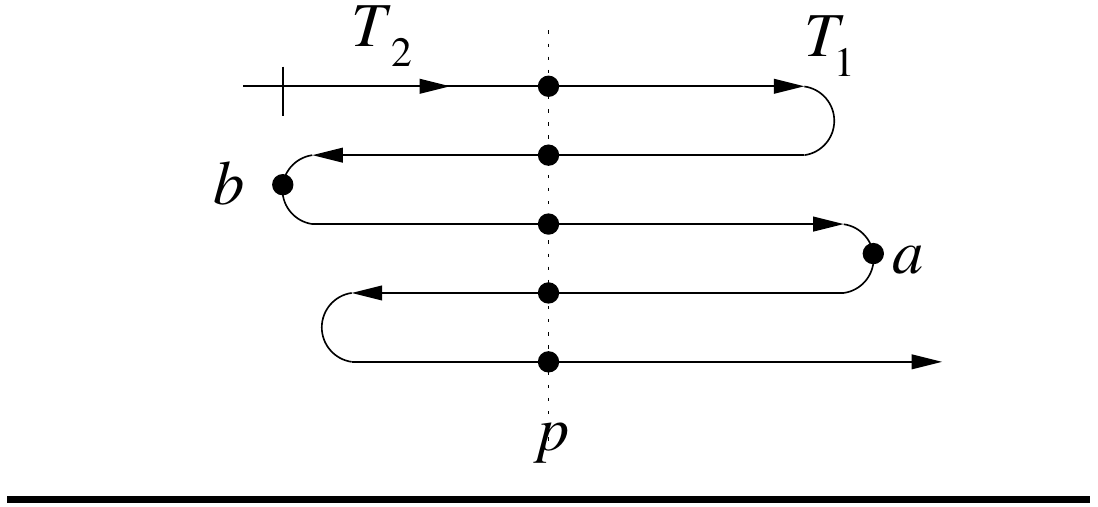}
\end{center}
\caption{More than three visits of the point $p$ by the trajectory 
of the robot.}
\label{fig:fg1}
\end{figure}
\begin{figure}[!htb]
\begin{center}
\includegraphics[width=12cm]{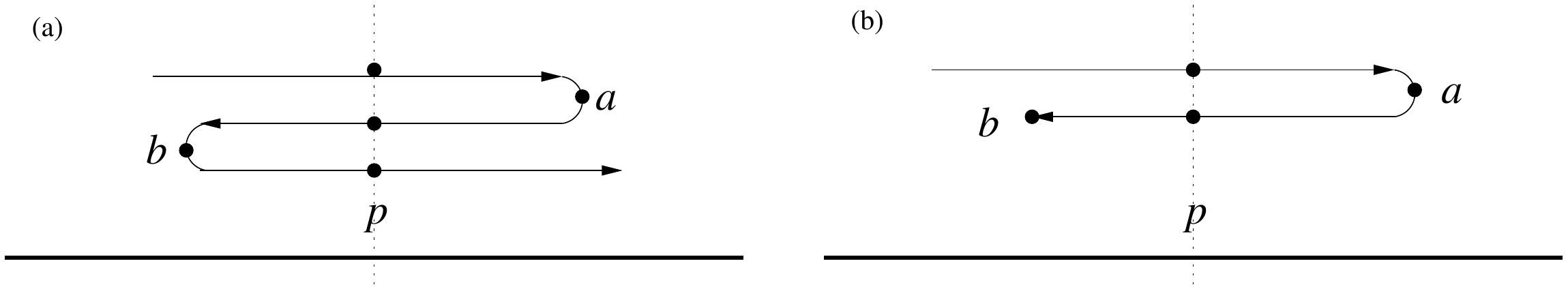}
\end{center}
\caption{Replacing the old trajectory of the robot with either of the
two depicted trajectories.}
\label{fig:fg2}
\end{figure}
Observe that the above lemmas applies to both offline and online algorithms. 
Furthermore, once a trajectory is specified, the robot can produce an 
order-preserving fully stretched solution as discussed below,
so it suffices to specify the trajectory of the robot.

Given an optimal trajectory 
$[t_0, t_1, t_2,t_3 \ldots, t_{m-1}, t_m]$, the robot makes a left turn 
at $t_1, t_3, \ldots$ and right turns at $t_2, t_4,\ldots$. 
Therefore, the segments 
$[t_2,t_1], [t_4, t_3],$ $ \ldots,$ $[t_{2i},t_{2i-1}],\ldots$ of $[0,L]$ are
traversed by the robot three times, $1\leq i \leq (m-1)/2$, 
and if $m$ is even, then the segment $[t_m, t_m-1]$, is traversed twice.
Furthermore, all these segments are pairwise disjoint, except possibly for 
the endpoints of the segments. 
We call the part of the trajectory $[t_{2i},t_{2i-1}]$,  $1\leq i \leq (m-1)/2$, 
traversed by the robot three times, 
 a {\em triple}, $t_{2i-1}$ is called its {\em left turning point} and
 $t_{2i}$ is called its {\em right turning point}.
If $m$ is even, then the segment $[t_{m}, t_{m-1}]$ is
 called a {\em double} and  $t_{m-1}$ is called its 
{\em left turning point}. Any line segment in the trajectory that is 
traversed exactly once by the robot 
is called a {\em straight  line segment} (see Figure \ref{fig:fg3}). 
When following a  straight  line segment the robot necessarily has sufficient 
supply of sensors and deposits them in attached positions. When following 
a segment of a triple or a double  for the first time, the robot picks 
all sensors found there and deposits then in attached positions when going 
back over the segment (see the proof of Lemma \ref{lm2}).   
Clearly, if two consecutive triples, or a triple and a double
share an endpoint, these two moves can be merged into a single triple, or a 
double. This observation and the preceding lemmas imply the 
following corollary. 
\begin{corollary}
\label{cor1}
There is an optimal order-preserving and fully stretched trajectory of the 
robot that produces a complete coverage of $[0,L]$ which consists of $k$ 
consecutive triples and straight line segments for some $k\geq 0$, and  
ends with a straight line segment or a double (see Figure \ref{fig:fg3}).
 \end{corollary}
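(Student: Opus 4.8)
The plan is to assemble the corollary directly from the two preceding lemmas together with the merging observation, so that the work is structural rather than computational. First I would fix an optimal trajectory $T=[t_0,t_1,\ldots,t_m]$ that is order-preserving and fully stretched, which exists by Lemma~\ref{lm:op}; since $T$ is optimal it also obeys the conclusion of Lemma~\ref{lm2}, namely that no point of $[0,L]$ lies on $T$ more than three times and the terminal point lies on $T$ at most twice. I would then read off the decomposition induced by the turning points: the odd-indexed $t_1,t_3,\ldots$ are left turns and the even-indexed $t_2,t_4,\ldots$ are right turns, so the elementary intervals cut out by all turning points each carry a well-defined traversal count of $1$, $2$, or $3$.

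The technical heart is to show that the count-$3$ intervals are exactly the triples $[t_{2i},t_{2i-1}]$ and that these, together with a possible terminal double, are pairwise interior-disjoint. I would derive this from Lemma~\ref{lm2} via a monotonicity claim on the turning points: the left turning points must satisfy $t_{2i+1}\ge t_{2i-1}$ and the right turning points $t_{2i+2}\ge t_{2i}$. Otherwise, if some left turn fell short of an earlier one, say $t_3<t_1$, then any point $p$ in the nonempty interval $(\max(t_2,t_4),t_3)$ would be crossed on the forward pass $t_0\to t_1$, the backward pass $t_1\to t_2$, the forward pass $t_2\to t_3$, and the backward pass $t_3\to t_4$, i.e.\ four times, contradicting Lemma~\ref{lm2}. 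Granting this monotonicity, each triple contributes exactly three passes to its interior, and two triples cannot overlap on an interval of positive length, since that would force at least six visits.

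With disjointness in hand, the global shape of $T$ is forced. Reading left to right, the robot alternates between straight line segments (traversed once) and triples (traversed three times): between two genuinely separated triples the intervening interval is crossed exactly once and is therefore a straight segment, while two triples that share only an endpoint are coalesced by the merging observation. For the tail I would invoke the second clause of Lemma~\ref{lm2}: the terminal point is visited at most twice, so the trajectory cannot close with a triple. Depending on the parity of $m$ it ends either with a rightward straight segment (traversed once, $m$ odd) or with the double $[t_m,t_{m-1}]$ (traversed twice, $m$ even). A final application of the merging observation absorbs any adjacent triple--triple or triple--double pairs that share an endpoint, yielding the claimed canonical form of $k$ triples interleaved with straight segments and terminating in a straight segment or a double, as in Figure~\ref{fig:fg3}.

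I expect the main obstacle to be precisely the monotonicity/disjointness step: arguing rigorously, from the bare three-visit bound, that the count-$3$ intervals coincide with the segments $[t_{2i},t_{2i-1}]$ and that no two of them overlap, while correctly handling the boundary case where the overlap argument would spill past the terminal point (there the ``at most twice'' clause does the work instead of the fourth-visit contradiction). Everything else---the order-preserving fully-stretched structure, the alternation, and the terminal analysis---then follows by bookkeeping on the turning sequence and the already-justified merging step.
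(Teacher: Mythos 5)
Your proposal follows essentially the same route as the paper: the corollary is derived from Lemma~\ref{lm:op} and Lemma~\ref{lm2} by decomposing the trajectory at its turning points into triples, a possible terminal double, and straight segments, observing pairwise disjointness of the triples, and merging adjacent triples/doubles that share an endpoint. The only difference is that you actually prove the disjointness step via the monotonicity of turning points and a fourth-visit contradiction, whereas the paper merely asserts it; your added detail is correct and strengthens the argument rather than diverging from it.
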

\begin{figure}[!htb]
\begin{center}
\includegraphics[width=10cm]{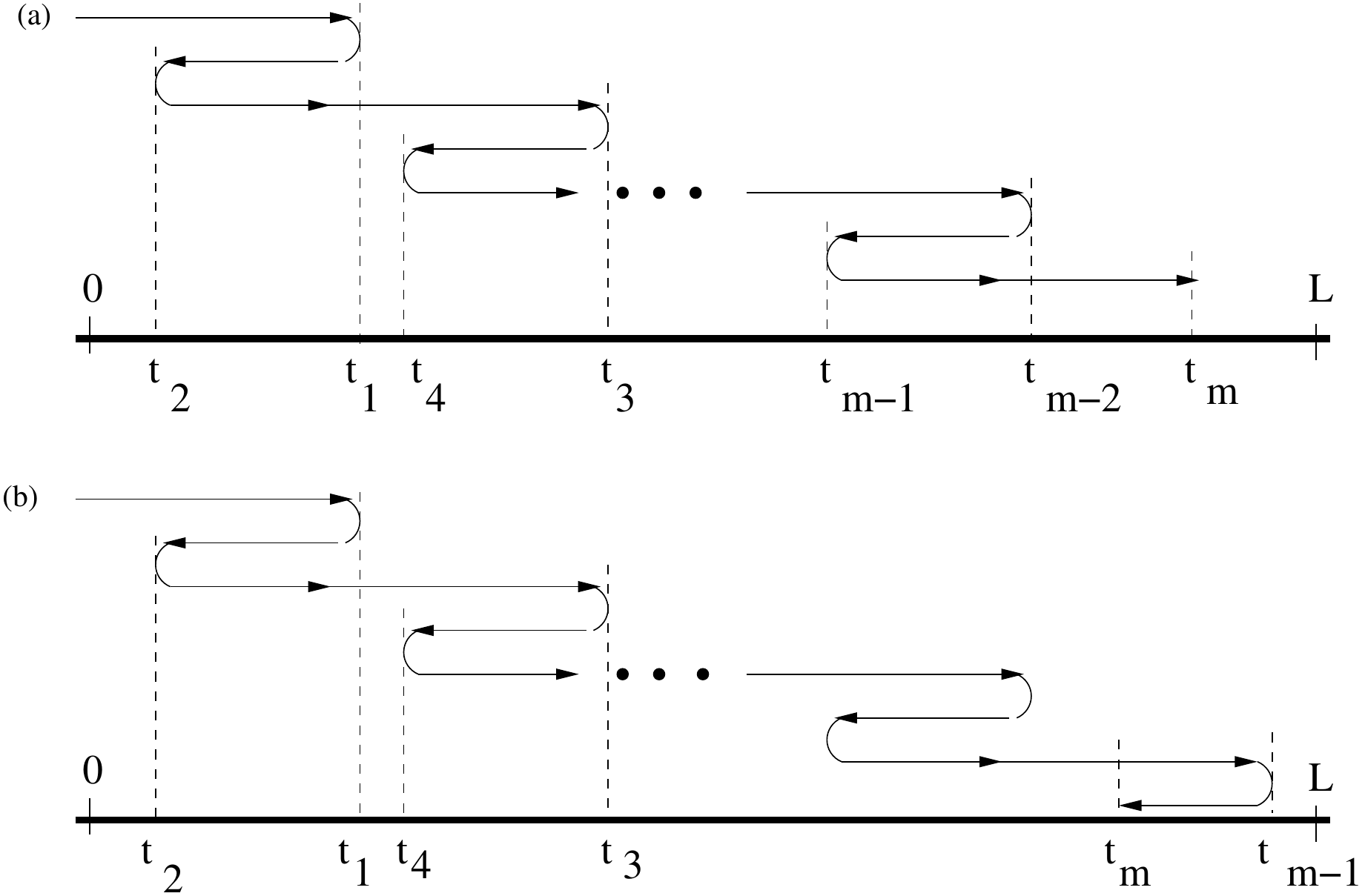}
\end{center}
\caption{Two possible shapes of an optimal trajectory.}
\label{fig:fg3}
\end{figure}
To construct an optimal trajectory of the robot we need to determine, 
from the given input instance, the ends of the triples and a double.
The following definition  of coverage balance is used to determine them.
\begin{definition}
The {\em coverage balance} of sensor $s_i$ at location  $x_i$ 
is defined to be $C_i= (2ri -r)-x_i$, 
i.e., the difference between the total length that can be covered by  
sensors $s_1,s_2,\ldots,s_i$ up to the center of $s_i$, and 
 the distance of $s_i$ to the beginning of the interval.   
\end{definition}

Consider the example in Figure~\ref{fig:ex}. The coverage balance 
of sensors listed
from left to right are $0.2, -1.1, -0.2,-0.1, 0.2, 0.3,-0.8$ 
and $0.2$. Notice that in 
the two examples of trajectories in this figure each left turn is done at a 
sensor with negative coverage balance. 
However, doing a left turn at every sensor with 
negative coverage balance would be wrong, because it could violate the 
three visits 
lemma. Similarly, doing a triple involving many consecutive sensors with 
negative coverage balance can be sub-optimal as well, 
as seen in the trajectory of Figure~\ref{fig:ex}(c).  
The following lemma specifies all potential left turning points.
\begin{lemma}
\label{lm:opt-tr}
Let $([t_0, t_1, t_2,t_3 \ldots, t_{m-1}, t_m])$ be an optimal trajectory which 
minimizes the number of triples. 
\begin{enumerate}
\item Every sensor with negative coverage balance is shifted left, and thus  
its location is in a triple  or the double  segment.
\item No triple  segment contains the location of a sensor with nonnegative 
coverage balance.
\item In a double segment
the rightmost sensor has negative coverage balance. 
\item For every triple  $[t_{2i},t_{2i-1}]$, the left-turning point 
$t_{2i-1}$ is a location of a sensor $x_j$ for some integer $j$ 
such that either $-2r<C_j<0$, or both, $-2r=C_j$ and $x_j=x_{j+1}$,  
and the coverage balance  of every other sensor located in the interval 
$[t_{2i},t_{2i-1}]$ is less than or equal to $-2r$. 
\item Let $k$ be the smallest integer such that all gaps in $[0,L]$ 
are to the left of $2rk$. Then $s_k$ is the last sensor to be moved. 
Let $c= x_k$ if $C_k<0$, else  $c= x_k+C_k$. 
If the trajectory does not end with a double  then $C_k\geq 0$, $m$ is odd,
and the termination point $t_m=c$. 
Otherwise the trajectory ends with a double, and $t_{m-1}=c$, 
i.e., the left-turning point of the double is $c$.
\end{enumerate}
\end{lemma}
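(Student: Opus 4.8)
The plan is to fix, via Corollary~\ref{cor1}, an optimal order-preserving fully-stretched trajectory $T$ with the smallest number of triples, and then to read off its structure from the single invariant $C_i=(2ri-r)-x_i$. The starting observation is that in any order-preserving fully-stretched solution each sensor $s_i$ placed before the termination point ends at its attached position $p_i:=2ri-r$; hence such a sensor is displaced to the \emph{left} exactly when $C_i<0$, to the \emph{right} exactly when $C_i>0$, and stays put when $C_i=0$. By the direction conventions of the trajectory the robot carries a sensor to the left of its original location only while moving leftward, and leftward motion occurs only inside triples and the possible double; so any left-shifted sensor has its original location $x_i$ inside a triple or the double. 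Combined with the claim (established together with Part~5) that every negative-balance sensor is actually placed, i.e.\ lies among the first $k$ sensors, this yields Part~1.

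For Parts~2 and~3 I would use local shortening arguments against the optimality of $T$. If a triple $[t_{2i},t_{2i-1}]$ contained a sensor $s_j$ with $C_j\ge 0$, then $p_j\ge x_j$, so $s_1,\dots,s_j$ at attached positions already cover $[0,2rj]=[0,p_j+r]\supseteq[0,x_j]$ using no sensor from the right of $x_j$; consequently nothing must be carried from the right of $x_j$ across it, and one may turn left at (or before) $x_j$, strictly shortening $T$ --- a contradiction. The same computation applied to the single leftward excursion of the double forces its rightmost sensor to have negative balance, which is Part~3.

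Part~4 is where the real work lies, and I expect it to be the main obstacle. Here I would treat a triple as the mechanism that repairs a ``deep'' deficit, using the recurrence $C_i-C_{i-1}=2r-(x_i-x_{i-1})\le 2r$ together with $C_i\le -2r\iff x_i\ge p_{i+1}$ and $C_i>-2r\iff x_i<p_{i+1}$. The idea is that a sensor with $C_i\le -2r$ sits so far right of its target that the deficit on its left cannot be repaired without fetching a further sensor, so each maximal run of such sensors must be swept up inside a single triple; and that the instant the running balance first climbs back above $-2r$, at a sensor $x_j$ with $C_j<0$, placing $s_j$ at $p_j$ exactly closes the deficit, so by minimality of the number of triples the excursion can and must turn left precisely there. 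The boundary value $C_j=-2r$ is admissible only when $x_j=x_{j+1}$, since then a coincident sensor (of balance $0$) provides the effective recovery at a genuine sensor location, whereas if $x_{j+1}>x_j$ the balance is still below recovery and the excursion must continue. Turning the ``can-turn/must-turn'' dichotomy into a rigorous statement, while simultaneously verifying that all intermediate sensors of the triple carry balance $\le -2r$ and that no turn is wasted at a non-sensor point, is the delicate part; each implication is a separate exchange argument comparing the length of $T$ against a modified trajectory.

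Finally, for Part~5 I would identify the last active sensor. By definition of $k$ the interval $[2rk,L]$ contains no gap, so $s_{k+1},\dots,s_n$ may be left in place while $s_1,\dots,s_k$ are fully stretched, with $s_k$ at $p_k=2rk-r$ sealing the join at $2rk$; this shows $s_k$ is the last sensor moved and, as a by-product, that no sensor of index $>k$ has negative balance, discharging the claim deferred in Part~1. The two cases then follow from the direction of the final placement: if $C_k\ge 0$ then $s_k$ moves right (or stays), so $T$ ends with a rightward straight segment at $t_m=x_k+C_k=p_k$ with $m$ odd; if $C_k<0$ then $s_k$ must be carried left, forcing a terminal leftward excursion which by Lemma~\ref{lm2} is traversed only twice, i.e.\ a double whose left-turning point is $t_{m-1}=x_k=c$.
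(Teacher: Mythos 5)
Your overall strategy coincides with the paper's: fix an optimal order-preserving fully-stretched trajectory that minimizes the number of triples (Corollary~\ref{cor1}), observe that the attached target of $s_i$ is $x_i+C_i$ so the sign of $C_i$ dictates the direction of displacement, and derive each part by a local shortening or merging argument against optimality. Your Parts~1, 2, 3 and~5 match the paper's (equally brief) reasoning, including the use of Lemma~\ref{lm2} to cap the terminal leftward excursion at a double. One caution on Part~5: the ``by-product'' you claim there --- that no sensor of index greater than $k$ has negative balance --- does not follow from the definition of $k$ (the initial coverage of $[2rk,L]$ may depend on sensors of index at most $k$ that are subsequently pulled left), so the way you route Part~1 through Part~5 leaves a loose end that the paper, for better or worse, simply does not raise.

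The substantive gap is Part~4, which you rightly call the crux but then leave as a plan: you announce a ``can-turn/must-turn'' dichotomy and state that each implication is ``a separate exchange argument'' without exhibiting any of them. The paper's decisive step is short and you already hold its key ingredient, namely $C_j\le -2r\iff x_j\ge p_{j+1}$ where $p_i=2ri-r$: if the left-turning sensor $s_j$ of a triple had $C_j\le -2r$ with $x_j<x_{j+1}$, then $p_{j+1}=x_j+C_j+2r\le x_j=t_{2i-1}$, so $s_{j+1}$, which lies strictly to the right of this triple, must itself be carried to $t_{2i-1}$ or beyond by a different triple or by the double; that excursion then reaches the current triple's left-turning point, so either the two excursions merge into one (contradicting minimality of the number of triples) or some point is traversed more than three times (contradicting Lemma~\ref{lm2}). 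Conversely, the claim that every non-turning sensor inside a triple has balance at most $-2r$ needs its own length comparison, which neither you nor, frankly, the paper spells out: if a triple turning at $x_{j'}$ also contained an earlier sensor $s_j$ with $-2r<C_j<0$, then $p_{j+1}>x_j$, so the triple can be split at $x_j$ and $p_{j+1}$ into two disjoint excursions, saving $2(p_{j+1}-x_j)>0$. As submitted, these directions are asserted rather than established, so Part~4 remains unproved in your write-up.
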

\begin{proof}
Clearly, any sensor with negative coverage balance needs to be shifted to the 
left and this is done using a triple of the double . If interval  
$[t_{2i},t_{2i-1}]$ is a triple  than it cannot contain a sensor 
with nonnegative coverage balance  or otherwise we can replace one triple with 
a shorter triple  or two triples that are shorter in total.\\
If the rightmost sensor of a double  has nonnegative coverage balance, 
the double  can be made shorter by omitting this sensor from the 
double  and by making any shift of it to the right when it is encountered 
the first time.
   
Clearly, the left turning point of any triple should be a location of a 
sensor with negative coverage balance, otherwise the triple can be made 
shorter.
If $C_j \leq -2r$, and  $x_j < x_{j+1}$ then sensor $s_{j+1}$ must be shifted 
left to the position $t_{2i-1}$, or to the left of it by another triple or 
double. 
This would either allow a merge of triple  $[t_{2i},t_{2i-1}]$ with 
another triple or double, or it would contradict the three visits lemma. 
Thus,  either $-2r<C_j<0$, or $C_j= 2r$ and  $x_j= x_{j+1}$. 

Since all gaps are to the left of $2rk$, placing sensors 
$s_1,s_2,\ldots, s_k$ in attached position covers all gaps and $s_k$ is
the rightmost sensor on the trajectory. If $C_k$ is negative, the robot must 
turn right at $x_k$ for a double. If there is no double, 
$C_K\geq 0$ and by Corollary \ref{cor1}
the trajectory ends with a straight line segment,
terminating at~$c$. 
\end{proof}
Thus, by the preceding lemma, the potential left turning points in 
the example in Figure \ref{fig:ex} are the initial locations of sensors 
$s_3, s_4$, and $s_7$, but not $s_2$. 

\begin{definition} 
\label{def:A}
Let $m$ be the number of sensors whose coverage balance 
is  either $-2r<C_j<0$, or  $-2r=C_j$ and $x_j=x_{j+1}$.
The list $A$ of indices of sensors of 
{\em potential triple delimiters} is a list of pairs
$A= [(b_1, a_1), (b_2, a_2),  \cdots,  (b_m, a_m)]$ of sensor indices such that 
\begin{enumerate}
\item $a_1 < a_2 < \cdots < a_m$ are  the indices of all sensors such that
 either $-2r<C_j<0$, or  $-2r=C_j$ and $x_j=x_{j+1}$ 
\item $b_1 $ is the smallest index of a sensor with negative coverage balance, 
and   for  $1<i\leq m$ the value of $b_i$ is the smallest index 
larger than  $a_{i-1}$ with negative coverage balance.
\end{enumerate}
\end{definition}
\begin{lemma}
\label{lm:tpt}
Let $A$ be the list  of indices of sensors of potential triple 
delimiters,  $m$ 
be the number of pairs in $A$, and $c$ be defined as in Lemma \ref{lm:opt-tr}. 
There is an optimal, order preserving, fully-stretched trajectory such that for
some integer $j$, $0\leq j\leq m$,  
\begin{enumerate}
\item the trajectory contains $j$ triples  
$[x_{b_{i}} + C_{b_i},x_{a_{i}}]$,  $1\leq i\leq j$,
\item 
If $j<m$ then the trajectory ends
with a double,  $[x_{b_{j+1}} + C_{b_{j+1}},c]$, 
otherwise the trajectory ends with a straight line and its termination point 
is $c$.
\end{enumerate}
\end{lemma}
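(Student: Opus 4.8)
The plan is to begin from the optimal trajectory singled out in Lemma~\ref{lm:opt-tr}, namely one that is order-preserving, fully stretched, and uses the fewest triples; by Corollary~\ref{cor1} it is a sequence of triples and straight segments ending in a straight segment or a double. Writing $j$ for its number of triples, I would show that these triples coincide, in left-to-right order, with the first $j$ pairs $(b_1,a_1),\dots,(b_j,a_j)$ of $A$, and that the terminal segment is exactly the one described in the two cases. Everything reduces to matching the turning-point characterization of Lemma~\ref{lm:opt-tr} against the index bookkeeping of Definition~\ref{def:A}.

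First I would determine the two ends of a single triple. Its left-turning (right) endpoint is, by part~4 of Lemma~\ref{lm:opt-tr}, the location $x_a$ of a sensor with $-2r<C_a<0$ (or the stated boundary case), that is, an $a$-index of $A$. For the right-turning (left) endpoint, observe that $x_i+C_i=2ri-r$, so the attached position of the $i$-th sensor equals $x_i+C_i$; in a fully stretched solution the robot, on the return pass, drops the leftmost sensor of the block at its attached position and turns there. Minimality pins the turn: the robot must reach that attached position to deposit the sensor, yet any further leftward travel would add length without a pick-up or a drop. Hence a triple has the form $[x_b+C_b,\,x_a]$ for some indices $b\le a$.

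The structural heart of the argument is to identify the block cut out by a triple with a pair of $A$ and to show the triples form a prefix. I would use the monotonicity $C_{i+1}-C_i=2r-(x_{i+1}-x_i)\le 2r$: once a balance is at most $-2r$ it can return above $-2r$ only through a value in $(-2r,0)$, so between consecutive $a$-indices every balance is $\le -2r$ or $\ge 0$, and by part~2 a triple excludes the latter. Combined with part~1 (every negative-balance sensor must be shifted left and so lies in a triple or the double), this forces the leftmost sensor of a triple to be the first negative-balance sensor after the preceding triple, i.e.\ the index $b_i$, and its right end to be the next $a$-index $a_i$. Because a single rightward (straight) pass can deposit a sensor only at or to the right of where it was picked up, it can never bring a negative-balance sensor to its attached position on its left; hence no negative block can be skipped, and scanning from $0$ the triples are forced to be $(b_1,a_1),(b_2,a_2),\dots$ consecutively.

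It remains to read off the terminal segment from part~5 of Lemma~\ref{lm:opt-tr}. If the last move is a straight segment then $C_k\ge 0$; since the instance is gap-free to the right of $2rk$, the balances are nonnegative from $s_k$ onward, so no $a$-index lies beyond the last triple, every $a$-index is used, $j=m$, and the trajectory terminates at $c$. If the last move is a double then $C_k<0$, and the double resolves the final run of negative-balance sensors: by the endpoint analysis above its left end is the attached position $x_{b_{j+1}}+C_{b_{j+1}}$, while part~5 fixes its left-turning point at $c$, giving $[x_{b_{j+1}}+C_{b_{j+1}},\,c]$ with $j<m$ (the $a$-indices falling inside the double are not triple ends). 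I expect the main obstacle to be the endpoint-minimality step together with the prefix claim — proving that each triple neither overshoots nor stops short of $x_{b_i}+C_{b_i}$, that the blocks are consumed consecutively from $b_1$, and that the terminal run is packaged as a single double rather than as further triples — whereas the coverage-balance monotonicity underlying the block description is only a one-line computation.
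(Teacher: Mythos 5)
Your proposal is correct and follows essentially the same route as the paper: both arguments start from the triple-minimizing optimal trajectory of Lemma~\ref{lm:opt-tr} (together with the shape given by Corollary~\ref{cor1}) and match its left- and right-turning points against the bookkeeping of Definition~\ref{def:A} to conclude that the triples are exactly a prefix of the pairs of $A$ and that the terminal segment is the stated double or straight segment. You merely spell out details the paper leaves implicit (the right-turning point at the attached position $x_{b_i}+C_{b_i}$, the one-block-per-triple and prefix claims, and the coverage-balance monotonicity), and your handling of the boundary case $C_j=-2r$, $x_j=x_{j+1}$ is consistent with the definition.
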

\begin{proof}
 Consider an optimal trajectory $T$ which minimizes the number of 
triples.  
Each  pair in list $A$ gives an interval of indices of sensors with negative 
coverage balance.
According to Lemma \ref{lm:opt-tr}, these sensors must be moved left by a 
triple or a double. Furthermore the left turning point of any triple 
is the location of a sensor $s_{a_i}$. Thus, if $T$ contains $j$ triples 
then $1\leq j\leq m$, and the triples are as stated in the lemma. 
If $j<m$ then all remaining sensors with negative coverage must be shifted 
left by a double, and, by Lemma \ref{lm:opt-tr}, this move is 
 $[x_{b_{j+1}} + C_{b_{j+1}},c]$. 
If $j=m$ then the trajectory must end at $c$.
\end{proof}

The main idea of our offline algorithm is to calculate the list $A$ of 
potential triple delimiters as defined in Definition \ref{def:A}.
Let $T_j$ be a trajectory that uses triples on the first $j$ 
pairs of $A$, $0\leq j \leq m$, and one double 
if $j<m$. We define the overhead $o_j$ of a trajectory $T_j$ to be the 
difference 
between the length of $T_j$ and the straight line trajectory. Clearly,

\[o_j = \left\{\begin{array}{ll}
c- x_{b_{j+1}}-C_{b_{j+1}}+\sum_{i=1}^j 2(x_{a_i} -x_{b_i} -C_{b_i}) 
\mbox{ for } 1 < j < m,\\ 
\sum_{i=1}^m 2(x_{a_i} -x_{b_i} -C_{b_i}) \mbox{ for }j=m
\end{array}\right.\]

The algorithm calculates the overhead of $T_j$ trajectories for $1\leq j\leq m$ 
and chooses the trajectory with the minimum overhead. 
By Lemma \ref{lm:tpt}, the trajectory with the minimum overhead is optimal.
Thus a robot finds the coordinates of an optimal trajectory
by executing  the offline algorithm below. 
\begin{tabbing}
xxxx\=xxx\=xxx\=xxx\=xxx\=xxx\=xxx\=xxx\=xxxxxxxxxxxxxxxxxx\=\kill
 {\bf Offline} Algorithm\\
\line(1,0){440}\\
\> {\bf Input:}
 the length $L$ of the segment, the number $n$ of sensors,\\ 
\>\>their initial locations $x_1\leq x_2 \leq \ldots \leq x_n$, 
and their range $r$;\\
\> {\bf Output:} the trajectory points for the robot.\\ 
\line(1,0){440}\\
{\bf 1} {\bf Scan} $x_1,x_2,\ldots,x_n$ for gaps;\\
{\bf 2} {\bf if} gaps exist {\bf then} \\
\>{\bf 2.1}\> {\bf Compute } the smallest integer  $k$
such that all gaps are to the left of $x_k$;\\ 
\>{\bf 2.2} \> {\bf Compute } the sequence  $C_i= (2ri -r)-x_i$, 
$1\leq i \leq k$;\\
\>{\bf 2.3} \> {\bf if} $C_k<0$ {\bf then} $c\la x_k$; {\bf else} 
$c\la x_k+C_k$;\\
\>\>\> // $c$ is the potential left-turning point of a double.\\ 
\>{\bf 2.4} \> {\bf Scan} the sequence  $C_1, C_2, \ldots,C_k$ and\\ 
\>\>\>{\bf compute}  
list $<A, B> = [(b_1,a_1), (b_2,a_2),  \cdots,  (b_m,a_m)]$;\\ 
\>\>\> // potential triple delimiters\\
\>{\bf 2.5} 
$o_j\la (c- x_{b_{j+1}}-C_{b_{j+1}})+\sum_{i=1}^j 2(x_{a_i} -x_{b_i} -C_{b_i}) $,
$1\leq j\leq m-1$; // $T_j$ overhead\\
\>{\bf 2.6} $o_m\la \sum_{i=1}^m 2(x_{a_i} -x_{b_i} -C_{b_i}) $; // $T_m$ overhead\\
\>{\bf 2.7} {\bf Compute}  $min\{ o_1,o_2, \ldots ,o_m\}$;
and its index $k$;\\
\>{\bf 2.8} {\bf Output} $x_{a_{1}},x_{b_{1}}+C_{b_{1}},x_{a_{2}}
,x_{b_{2}}+C_{b_{2}},\ldots, x_{a_{k}},x_{b_{2}}+C_{b_{2}}$;\\
\>\> //the sequence of left/right turning points of the optimal trajectory,\\
\>{\bf 2.9 If} $k<m$ {\bf then} {\bf Output} {\em there is a double  from} 
$c$ {\em to} $x_{b_{k+1}} +C_{b_{k+1}}$; \\
\>\>{\bf else the trajectory ends at $c$;}\\
$\;\;\;$
{\bf else} $[0,L]$ is initially completely covered, robot does nothing;\\ 
\end{tabbing} 
Since algorithm {\em Offline} calculates the overheads of all 
trajectories that satisfy Lemma \ref{lm:tpt} and picks 
the one with the smallest overhead, the Corollary~\ref{cor1} and 
Lemma~\ref{lm:tpt} imply that the selected trajectory is  optimal. Clearly, 
all calculations in each step are of $O(n)$ complexity. Thus we have 
the following theorem.
\begin{theorem}
\label{thm1}
Assume we are
given $n$ sensors in the line segment $[0, L]$
and a robot with starting position $0$.  Algorithm
{\em Offline} computes an optimal trajectory 
for the robot to follow in $O(n)$ time. 
\end{theorem}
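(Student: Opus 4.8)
The plan is to prove the theorem in two parts: that the trajectory returned by algorithm \emph{Offline} is optimal, and that it is computed in $O(n)$ time. For optimality, almost all of the structural work is already done by the preceding results, so I would begin by assembling them. By Lemma~\ref{lm:op} and the Three Visits Lemma (Lemma~\ref{lm2}) there is an optimal solution that is order-preserving and fully stretched, and by Corollary~\ref{cor1} such a trajectory is a sequence of triples and straight segments ending with a straight segment or a double. Lemma~\ref{lm:tpt} then pins this down to a finite family: there is an optimal trajectory equal to one of the candidates $T_j$, where $T_j$ places triples on exactly the first $j$ pairs of the list $A$ of potential triple delimiters and, when $j<m$, a terminating double on the $(j{+}1)$-st pair. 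Hence it suffices to show that algorithm \emph{Offline} assigns the correct cost to each $T_j$ and returns a minimizer.

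The correctness observation I would make explicit is that minimizing trajectory length over $\{T_j\}$ coincides with minimizing the overhead $o_j$. Each $T_j$ reaches the same rightmost point $c$ of Lemma~\ref{lm:opt-tr}(5) and produces complete coverage, since every sensor of negative coverage balance is shifted left into a triple or the double and then $s_1,\dots,s_k$ occupy attached positions covering all gaps. Measured against the straight-line baseline of length $c$, each triple $[x_{b_i}+C_{b_i},x_{a_i}]$ is traversed three times instead of once and so contributes an extra $2(x_{a_i}-x_{b_i}-C_{b_i})$, while a terminating double contributes an extra $c-x_{b_{j+1}}-C_{b_{j+1}}$. A direct accounting of total rightward minus leftward travel (which must equal the endpoint $t_m$) confirms that the length of $T_j$ equals exactly $c+o_j$ with $o_j$ as in the displayed formula. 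Since $c$ is constant across the family, the $T_j$ of least length is the one of least overhead, which is precisely what steps 2.5--2.9 select; by Lemma~\ref{lm:tpt} this candidate is a global optimum.

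For the running time I would verify each step is $O(n)$. Scanning for gaps and computing the threshold index $k$ is a single pass (step 2.1); computing the balances $C_i$ and the value $c$ costs $O(n)$ and $O(1)$ (steps 2.2--2.3); and building $A=[(b_1,a_1),\dots,(b_m,a_m)]$ is one scan of $C_1,\dots,C_k$ recording the indices whose coverage balance meets the conditions of Definition~\ref{def:A} (step 2.4). The only step that is not obviously linear is the computation of the overheads: evaluating each $o_j$ literally from its formula involves the sum $\sum_{i=1}^{j}2(x_{a_i}-x_{b_i}-C_{b_i})$, so a naive computation of all of them costs $\Theta(m^2)$.

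The main obstacle is therefore to show this step can be done in linear total time, and I would resolve it with a prefix-sum pass: maintain a running value $S_j=\sum_{i=1}^{j}2(x_{a_i}-x_{b_i}-C_{b_i})$ via the $O(1)$ recurrence $S_j=S_{j-1}+2(x_{a_j}-x_{b_j}-C_{b_j})$, so that each $o_j=(c-x_{b_{j+1}}-C_{b_{j+1}})+S_j$, together with $o_m=S_m$, is obtained in constant time and all overheads in $O(m)=O(n)$. Taking the minimum (step 2.7) and emitting the turning points (steps 2.8--2.9) are then single passes over lists of length at most $n$. Summing the costs of all steps gives $O(n)$, and by the correctness argument above the emitted trajectory is optimal, establishing the theorem.
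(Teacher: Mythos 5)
Your proposal is correct and follows essentially the same route as the paper: reduce to the finite candidate family $\{T_j\}$ via Corollary~\ref{cor1} and Lemma~\ref{lm:tpt}, observe that each $T_j$ has length $c+o_j$ so minimizing overhead minimizes length, and check that every step of the algorithm runs in linear time. The one place you add something the paper leaves implicit is the prefix-sum computation of the overheads $o_j$ (the paper simply asserts all steps are $O(n)$, while a literal evaluation of the sums would be quadratic), and that observation is a welcome, correct filling-in of the running-time claim.
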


 \section{Optimal Online Algorithms}
We now consider online algorithms for restoration of barrier coverage by a 
robot. 
For the online algorithm  we assume that the robot starts at position 0, 
it can move along the given line segment, but
the robot does not know the positions of sensors until it comes upon them.
As usual, we define the competitive ratio 
of an online algorithm as the length of the trajectory of 
the online algorithm divided by the length of the 
trajectory of the optimal offline algorithm. 

At the outset, observe that on the input instance where the sensors are placed 
in such a way that there is no gap in the barrier coverage, 
the offline algorithm produces a trajectory of length 0, 
while the online algorithm must traverse 
the entire barrier segment to ensure that the barrier is covered. Thus no 
online algorithm can have a bounded competitive ratio. To provide a more 
meaningful comparison of online with offline algorithms, 
we only consider below 
input instances where there is a gap in coverage at the very end of 
the barrier, that is, the point $L$ is uncovered. On such instances, all 
valid robot trajectories must have length at least $L-r$.
We also consider the possibility that $L$, the length of the barrier, is not
known to the robot and the robot will find it out only when reaching the end
of the barrier.
Since the performance of online algorithms depends on the knowledge 
of $L$, we consider the two possibilities 
separately. We use below the notion of potential 
left and right turning points as defined in the previous section. 

When the value of $L$ is unknown to the robot we show the following result.
\begin{theorem}
\label{thm3}
Assume that the robot does not know the length $L$ of the barrier 
$[0, L]$.
For any $0 < \epsilon \ll 1$, the competitive ratio of any online algorithm is at least $\frac{3}{2} - \epsilon$.
Furthermore, there is an online algorithm with competitive ratio at most $\frac{3}{2} $.
\end{theorem}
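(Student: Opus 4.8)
My plan is to handle the two halves separately, leaning in both cases on the triple/double structure established in Corollary~\ref{cor1} and Lemma~\ref{lm:tpt}.

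For the algorithm (the ``$\le \frac32$'' half) I would have the online robot advance to the right while picking up every sensor it meets and pushing the attached coverage front as far right as its currently-held surplus allows, exactly as an offline robot behaves along a straight segment. The robot defers any leftward correction until it is genuinely blocked, i.e.\ until the front cannot pass the robot's position because no surplus sensor is in hand, and only then makes a backward excursion to the pending gap before resuming rightward travel. Because the coverage balances $C_i=(2ri-r)-x_i$ depend only on $x_1,\dots,x_i$, by the time the robot is forced to correct it already knows the left end of the negative-balance block that must be shifted, so its turning points are exactly those assigned to the minimal-triple offline solution of Lemma~\ref{lm:tpt}. Hence the online robot reproduces the offline triples, and the two trajectories can differ \emph{only} on the last such block: the offline robot, knowing $L$, may close it as a double, whereas the online robot, unsure whether coverage already reaches $L$, is forced to traverse that block a third time merely to discover the endpoint.

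Writing any trajectory's length as its termination abscissa plus twice its total leftward displacement, I would then charge against $\mathrm{OPT}\ge L-r$: every triple shared with the offline solution contributes identically to numerator and denominator and cancels from the ratio, so the entire competitive ratio is governed by the single terminal discrepancy, where the online robot pays two extra traversals of a block of some length $d$ against the offline robot's one, over a common straight part of length at least $L-r$. Optimizing this lone mismatch pins the worst case at $\frac32$. The main obstacle is precisely this last step: proving rigorously that the greedy online excursions coincide block-for-block with the offline triples (so nothing but the terminal block survives in the ratio), and that this one mismatch can never push the ratio past $\frac32$ — in contrast to the factor $2$ that a careless ``run to $L$, then sweep back once'' strategy would incur by re-traversing already-covered stretches between blocks.

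For the lower bound I would run an adaptive adversary over a nested family $\{I_L\}$ of instances agreeing on every common prefix, so the robot cannot detect the true endpoint until it physically reaches it. The adversary keeps one negative-balance block pending and watches the first position $T$ at which the robot commits to a backward sweep. If the robot sweeps early, the adversary extends the barrier well past $T$, turning the sweep into a full triple where a terminal double would have sufficed and forcing a second correction; if the robot procrastinates, the adversary ends the barrier just beyond $T$, so that the optimum is only the cheap terminal double while the robot has over-committed. Balancing the two responses, with the irrevocably-traversed straight part playing against the disputed block exactly as in the charging computation above, drives the ratio of every deterministic online algorithm to at least $\frac32-\epsilon$, matching the algorithm.
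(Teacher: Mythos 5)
Your overall plan---an ``always do a triple'' online algorithm for the upper bound and an adaptive adversary for the lower bound---is the same as the paper's, but both halves as written contain concrete errors. In the lower bound, your response to an \emph{early} sweep (``extend the barrier well past $T$'') does not penalize the robot: on a long barrier with no further gaps, the optimal offline trajectory performs that very same early triple (cost roughly $L+2T$, versus roughly $2L$ for a terminal double when $T\ll L$), so the ratio collapses to $1$; and adding further gaps to ``force a second correction'' turns the construction into the repeated-pattern argument of Theorem~\ref{thm:lb}, which only yields $5/4$. The paper's adversary needs exactly one gap: a bare interval $[0,x]$ followed by a stack of $i$ sensors at $x=2ir$ that just suffices to cover it, with attached sensors beyond. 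If the robot ever turns left at some $y<6x$, the barrier is terminated just past $y$, so the robot's backward excursion costs about $3y$ while the offline terminal double costs about $2y$; if it never turns before $6x$, the barrier is terminated at $6x$, so the robot's forced double costs about $12x$ while the offline triple at $x$ costs about $8x$. That is, the correct response to a turn---whether early or late---is to \emph{end} the barrier immediately after it, and the stack must sit far from the origin ($i$ large) so the additive $O(r)$ terms vanish and the bound reaches $3/2-\epsilon$.

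In the upper bound, the claim that the online and offline trajectories ``differ only on the last such block'' is false: by Lemma~\ref{lm:tpt} the optimal trajectory may finish with a single double that spans \emph{several} negative-balance blocks, each of which your greedy robot covers with its own triple, so those triples are precisely \emph{not} shared with the offline solution; moreover, common contributions to numerator and denominator do not ``cancel from'' a ratio. The accounting is repairable: if $S$ is the total length of the blocks lying under the optimal double, $D\ge S$ is the double's length, $c$ is the common rightmost point reached, and $E$ is twice the total length of the triples the two trajectories do share, then the online cost is $c+E+2S$ and the optimal cost is $c+E+D\ge c+E+S$, whence the ratio is at most $(c+2S)/(c+S)\le 3/2$ because the blocks are disjoint subintervals of $[0,c]$ and hence $S\le c$. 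As stated, however, your charging argument omits exactly the blocks on which the two trajectories genuinely diverge, which is where the factor $3/2$ actually arises.
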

\begin{proof}
Assume there is an online algorithm $\cal{A}$ with competitive ratio 
$3/2 - \epsilon$ for some $\epsilon>0$. We give an adversary argument. 
Start with an input that has no sensors until position $x=2ir$ where there 
are $i>0$ sensors for some $i$ to be specified later. Clearly 
there are just enough sensors at $x$ to cover the segment $[0,x]$. 
Following this, the adversary starts placing sensors in attached position
starting at position $x+2r$. 
The robot has to make a turn at some point 
$y \geq x$ to cover the gap in coverage before $x$. If it does not make a turn before $6x$, 
the adversary can set $L=6x$, and 
the robot must do a double to the beginning, see 
Figure~\ref{fig:fg4ab} (a).
The trajectory produced by $\cal{A}$ has length at least 
$2L-r=12x-r$, while the offline algorithm covers the gap before 
$x$ with a triple  from $x$ using a trajectory of at most 
$3x-2r+5x= 8x-2r$.
This gives the competitive ratio of at least $(12x-r)/ (8x-2r) > 3/2$.

If the robot turns at any point $y$ such that  $x \leq y < 6x$, 
then the adversary concludes the barrier at $L=y+r+\delta$, see 
Figure~\ref{fig:fg4ab} (b).
Clearly, the trajectory produced by $\cal{A}$ has length at least 
$3y-r+\delta$ while the offline algorithm uses a trajectory of at most 
$2y+ 2\delta$. Thus the competitive ratio of the algorithm is at least 
$(3y-r+\delta)/(2y+2\delta) \geq 3/2 -(r+2\delta)/(2x+2\delta) \geq  3/2 -(r+2\delta)/(2ir+2\delta) \geq 3/2 - \epsilon$ for sufficiently large $i$. 

To prove the second part of the theorem
observe that the algorithm that solves any gap in coverage with a triple  
from any potential left turning point has competitive ratio at most 
$3/2$. 
\end{proof}
In the remainder of the section, we assume that $L$, the length of the 
barrier segment, is known to the online algorithm. We first prove a lower 
bound on the competitive ratio of any online algorithm for the problem. 
\begin{theorem}
\label{thm:lb}
Assume that the online robot knows the length $L$ of the barrier  $[0, L]$. 
For any $0 < \epsilon \ll 1 $, the competitive ratio of any online algorithm is at least $\frac{5}{4} - \epsilon$.
\end{theorem}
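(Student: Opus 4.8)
The plan is to run an adaptive adversary argument in the spirit of the proof of Theorem~\ref{thm3}, but compensating for the one thing that has changed: the robot now knows $L$, so the endpoint can no longer serve as the hidden parameter. Instead I would hide the \emph{positions of the sensors lying ahead of the robot}. Given an online algorithm $\cA$, I force it onto an instance where a gap opens at a point $p$ (the robot exhausts its carried supply there) and the sensors needed to repair $[p,\cdot]$ sit somewhere to the right. The robot is obliged to travel right collecting sensors, and at some moment it must \emph{commit}: either turn left and begin depositing (committing to a triple it will have to re-cross on its way out to $L$), or keep moving right. Because the point $L$ is uncovered and $L$ is known, the robot must reach $\approx L$ no matter what; the single bit it cannot know is whether the deficit continues all the way to $L$ or stops short with covered barrier beyond it. That is the bit the adversary withholds and fixes only after seeing $\cA$'s commitment. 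Note that reacting is not free here: re-crossing a segment to undo a premature turn costs distance, which is exactly why knowing $L$ does not rescue the robot.

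Concretely, let $E$ be the first point at which the cumulative coverage balance returns to $0$, so $E$ is the \emph{earliest} point at which $\cA$ is even able to turn back and repair $[p,E]$; write $u=E-p$ and $v=L-E$, and take $p\approx 0$ so $L=u+v$. The adversary offers two completions. $(i)$ If $\cA$ turns back at $E$, the adversary makes $(E,L]$ a second deficit whose sensors form a cluster near $L$; then by Corollary~\ref{cor1} and Lemma~\ref{lm:opt-tr} the offline optimum is a \emph{single double} over $[0,L]$ of total length $2(u+v)$, whereas $\cA$ is committed to a triple over $[p,E]$ followed by a double over $[E,L]$, of total length $3u+2v$, giving ratio $\tfrac{3u+2v}{2u+2v}$. $(ii)$ If $\cA$ instead passes $E$ without depositing, the adversary makes $(E,L]$ already covered; the offline optimum is a triple over $[p,E]$ plus a straight run, of length $3u+v$, while $\cA$ must now run out to $L$ and sweep back to repair $[p,E]$, paying $2u+2v$, giving ratio $\tfrac{2u+2v}{3u+v}$. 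Taking the number of sensors large makes the additive $O(r)$ corrections negligible, which is the source of the $-\epsilon$.

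I would then tune the geometry $u,v$ (controlled by how densely the adversary seeds $[p,E]$ and where it plants the cluster) to make both branches costly. The main subtlety is that a \emph{single} decision point only forces $\max_{u,v}\min\{\tfrac{3u+2v}{2u+2v},\tfrac{2u+2v}{3u+v}\}$, and the two expressions cross strictly below $5/4$; a lone gadget is therefore not enough. To reach $5/4$ I would combine several copies: nest a fresh deficit inside the ``continue'' branch, so that a robot which keeps postponing its commitment is punished cumulatively while a robot which commits early meets the $\tfrac{3u+2v}{2u+2v}$ penalty, and choose the schedule of lengths geometrically. Optimizing over the family drives the worst-case ratio up to $5/4$, yielding $\tfrac54-\epsilon$ for every $0<\epsilon\ll 1$.

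The step I expect to be genuinely delicate is the \emph{exact} evaluation of the offline optimum in each completion. I would use Lemma~\ref{lm:opt-tr} to argue that the negative-coverage-balance sensors I plant force precisely the claimed triples and that the final deficit block is handled by a double whose left-turning point is the distinguished value $c$, and then invoke Lemma~\ref{lm:op}, Corollary~\ref{cor1}, and Lemma~\ref{lm:tpt} to certify that no cheaper order-preserving, fully-stretched trajectory exists. Equally important is verifying that the adversary's two completions are globally consistent: they must share the same revealed prefix, respect the fixed known $L$, induce valid coverage-balance profiles, and carry enough total sensor range to cover $[0,L]$. This bookkeeping, together with pinning down the geometric schedule so that \emph{every} online strategy incurs ratio at least $\tfrac54-\epsilon$, is where the real work of the proof lies.
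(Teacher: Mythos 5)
Your single-gadget analysis is a genuinely good start --- you correctly compute that one commit-or-continue decision point only forces $\max_{u,v}\min\bigl\{\tfrac{3u+2v}{2u+2v},\tfrac{2u+2v}{3u+v}\bigr\}\approx 1.175$, well below $\tfrac54$ --- but the step you wave at to close the gap is exactly where the proof lives, and the iteration you propose points in the wrong direction. You suggest nesting fresh deficits inside the \emph{continue} branch so that postponement is punished cumulatively while an early triple is punished by the branch-(i) ratio $\tfrac{3u+2v}{2u+2v}$. That ratio reaches $\tfrac54$ only when $u\geq v$, i.e.\ when the tripled region already spans at least half of what remains; but if the first gadget is that large, the ``triple it and go straight'' offline benchmark costs $3u+v\geq 2L$ on the postponement branch, no better than the double, and the postponement penalty collapses to $1$. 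The paper iterates in the \emph{opposite} branch: it repeatedly presents a \emph{small} gadget (a gap of length $2r(k-1)$ followed by a stack of $k$ collocated sensors that barely covers it), issuing a new gadget each time the robot triples the previous one, and the moment the robot declines to triple a gap it creates no further gaps. The robot's whole strategy then collapses to one scalar, the endpoint $x$ of its tripled prefix, with $T_A\approx x+2L$; the two benchmarks are the single double ($T_1\approx 2L$, which wins when $x\geq L/2$) and ``the same triples plus exactly one more, then straight'' ($T_2\approx 2x+L$, which wins when $x\leq L/2$), and the two meet at $x=L/2$ with ratio exactly $\tfrac54$. Choosing $r=k=L^{1/3}$ makes the $O(rk+ir)$ error terms vanish, which is where the $-\epsilon$ comes from (your ``take the number of sensors large'' is the right instinct here).

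A second, related gap: in your branch (ii) the ``commitment'' at $E$ is not actually binding. A robot that passes $E$ and then observes attached sensors can turn around after an overshoot of $\delta$ and still execute the triple at cost $3u+v+2\delta$, essentially matching your claimed offline optimum, so the cost $2u+2v$ you charge it is not forced and even your $1.175$ is not secured by a single gadget. The adversary cannot extract a ratio from one irrevocable bit; it has to charge the robot for information continuously, which is precisely what the paper's stream of small identical gadgets accomplishes (each stack is a point at which the gap behind is \emph{just} coverable, so probing ahead before deciding is itself accounted for in the position $x$ of the last triple). Your instinct to certify the offline optimum in each completion via Lemma~\ref{lm:opt-tr} and Lemma~\ref{lm:tpt} is sound but also more than is needed: the paper only ever uses the two explicit feasible trajectories $T_1$ and $T_2$ as upper bounds on the optimum, which sidesteps that bookkeeping entirely.
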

\begin{figure}[!htb]
\begin{center}
\includegraphics[width=12 cm]{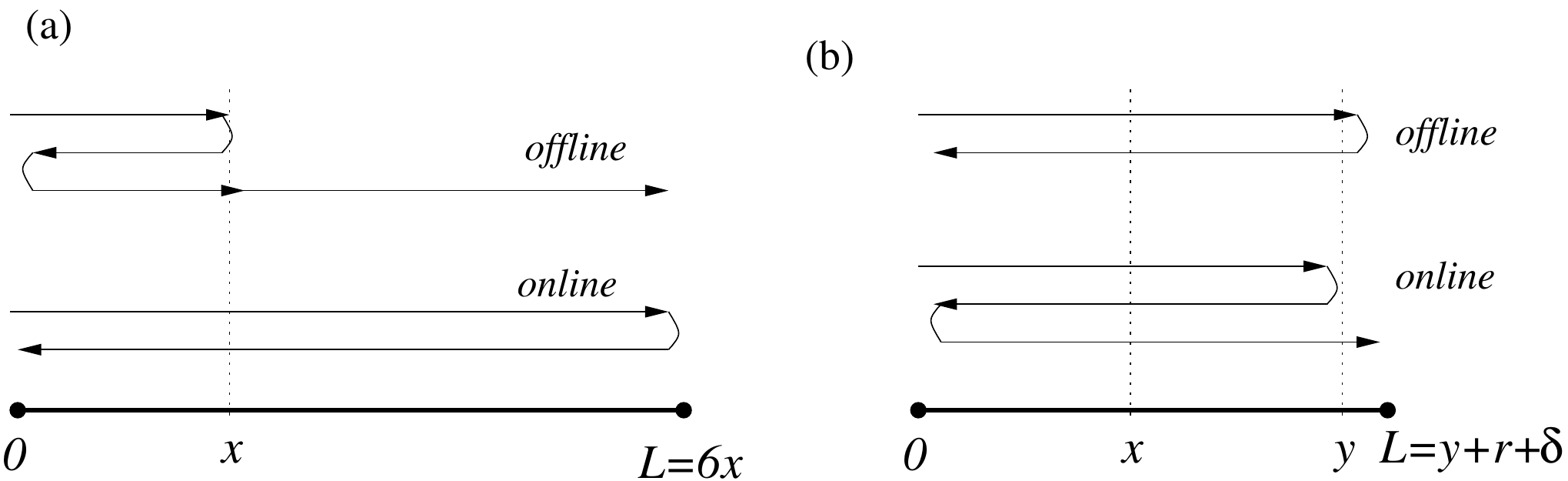}
\end{center}
\caption{Competitive ratio for the case when the robot does not know $L$.
In (a) the robot turns at $L$ or $y\geq 6x$, in 
(b)  the robot turns at $y< 6x$.}
\label{fig:fg4ab}
\end{figure}

\begin{figure}[!htb]
\begin{center}
\includegraphics[width=10cm]{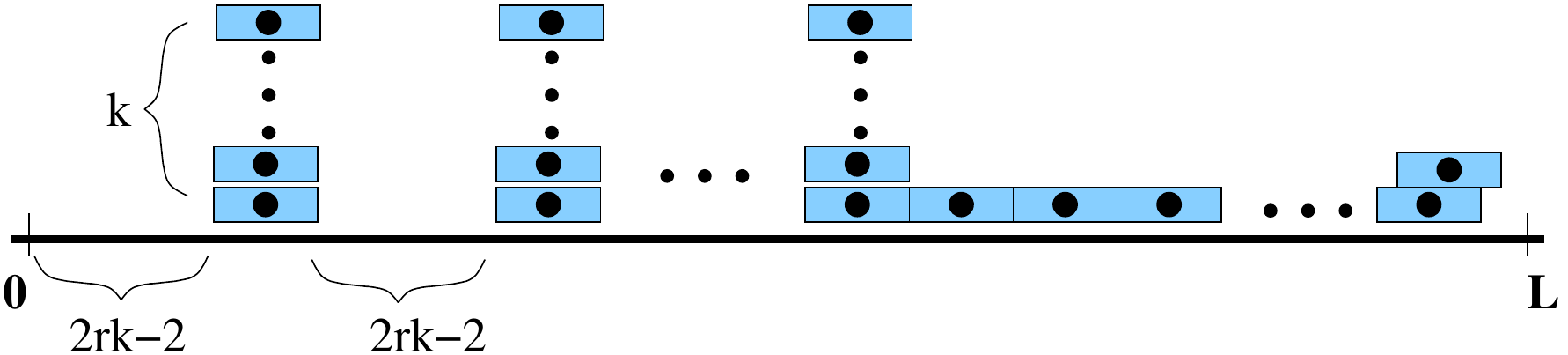}
\end{center}
\caption{Example for the lower bound on the competitive ratio}
\label{fig:lower-bound}
\end{figure}

\begin{proof}
Once again, we provide an adversary argument. We first give the high-level idea of the argument: the adversary creates a gap followed by a pile of $k$ stacked sensors that is just sufficient to cover the gap (see Figure~\ref{fig:lower-bound}). If the online robot decides to use a triple to cover the gap, the adversary repeats the pattern. If at some point, the online robot does not return with a triple to cover the gap, the adversary creates no more gaps. We then show that regardless of when the online robot decides not to cover a given gap with a triple, its trajectory is at least $\frac{5}{4}$ times the optimal trajectory. 

We proceed to give the details of the argument, constructing the input progressively as a function of the behaviour of the robot. Assume that the length of the barrier is $L$; the values of $k$ and $r$ will be specified later as functions of $L$. We repeat the following steps: We create a gap of size $2r(k-1)$ followed by $k$ sensors all at position $2r(k-1)+r$. It is clear that the first $k-1$ sensors can cover the first gap. We now observe the online robot. 
If the robot turns left at any point $y \geq x$ and does a triple to cover the gap before $x$, then as soon as possible after it returns to $y$, we repeat the same configuration: a gap of the same length followed by a stack of $k$ sensors. Until the robot 
turns left, we simply place sensors $2r$ apart. These steps are repeated
until we reach  $L-2r$, at which point we shorten the last gap if needed, and add as many sensors as needed to cover the preceding gap. We also add an additional (and last) sensor, to  force every algorithm, online and offline,  to come to position $L-r$ to cover the gap at $L$. 

Consider the online robot after it has finished its operation. Define the last gap that the robot covered with a triple to be 
the $i^{th}$ gap and let $x$ be the position of the stack of sensors immediately after this gap. Then $x \geq  2ikr-r$. Observe that the input instance has exactly one more gap after $x$, and the robot needs to cover the $i+1^{st}$ gap with a double. We denote the length of the online robot's trajectory by $T_A$. The length of the robot's trajectory until $x$ is at least $3x - 4ir + 2r$ and the length of the double to return to cover the $i+1^{st}$ gap is $2r +2(L-r-(x+2r))$. Therefore $T_A = x + 2L -4ir -2r$.
We compare the performance of the robot with two specific offline algorithms. The first algorithm does no triples and achieves coverage simply by doing one double. We denote the length of its trajectory by $T_1$; clearly,
$T_1 = 2L -3r$.
 The second algorithm uses the same trajectory as the online robot for all its triples, proceeds to cover the $i+1^{st}$ gap also with a triple followed by moving right to $L-r$. (Recall that after the $i+1^{st}$ stack of sensors, there are no gaps except the last gap at $L$.) We denote the length of the second algorithm's trajectory by $T_2$. Then the length of the trajectory for the second algorithm until $x+2rk$ is  $3(x+2rk)-4(i+1)r + 2r$ and the length of the final (single) segment  is $L-r -(x+2rk)$.
 \begin{equation*}
 T_2 = 2x+L+4rk-4ir -3r
 \end{equation*}
 
 Suppose $x \geq L/2 - 2rk$. Then:
$$ \frac{T_A}{T_1}  \geq  \frac{x+2L-4ir -2r}{2L} 
 >  \frac{L/2 - 2rk + 2L - 4ir -2r}{2L} 
 =  \frac{5}{4} - \frac{2rk + 4ir + 2r}{2L}$$

Suppose instead $x <  L/2 - 2rk$. Then
$$\frac{T_A}{T_2}  \geq  \frac{x+2L-4ir -2r}{2x+L+4rk-4ir - 3r}
 > \frac{x+2L-4ir -2r}{2x+L+4rk}
 = \frac{x + L/2 +2rk}{2x+ L + 4rk} +$$ 
$$+\frac{3L/2 - 2r(k+1) -4ir}{2x+L+4rk} 
 >  \frac{1}{2} + \frac{3L/2 - 2r(k+1) -4ir}{2L}
 =  \frac{5}{4} - \frac{2rk+4ir+2r}{2L}$$
By choosing $r = k = L^{1/3}$, we obtain $i = O(L^{1/3})$. This gives a lower bound on the competitive ratio of the online robot of $\frac{5}{4} - O(L^{-1/3})$. By choosing $L$ to be as large as needed, the theorem follows. 
\end{proof}

The optimal offline algorithm suggests that if the online robot stops doing 
triples too soon, it may be beaten by an algorithm that does perhaps 
just one more triple  which avoids the double  at the end. 
However if it keeps doing triples for too long, it may be beaten by an 
algorithm that does fewer triples. It is natural to ask whether there 
an optimal fraction of the segment such that the online robot can decide in 
advance to do triples only until then. We say that an online algorithm 
has a {\em fixed switching point} $z$ if it covers each gap before $z$ 
with a triple, and all gaps after $z$ with the final double. Therefore, 
the online robot turns left at most once after $z$, and if it does, it turns
at position $L-r$ to do the final double. We show below a tight bound on the 
competitive ratio of an online algorithm with a fixed switching point.
\begin{theorem}
\label{thm:fixed-switching-point}
Assume that the robot knows the length $L$ of the barrier 
$[0, L]$.
\begin{enumerate}
\item 
For any $0 < \epsilon \ll 1$, the competitive ratio of any online algorithm with a fixed switching point is at least $\frac{4}{3} - \epsilon$.
\item 
There is an online algorithm with fixed switching point with competitive ratio at most $\frac{4}{3} $.
\end{enumerate}
\end{theorem}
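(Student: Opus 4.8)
The plan is to prove the two halves separately, but to let the lower bound tell us which switching point to use in the upper bound. Throughout I would work in the idealized limit $r\to 0$ (with $k$ sensors per stack and $2rk$ held fixed), exactly as in the proof of Theorem~\ref{thm:lb}; the finite-$r$ corrections are lower order, and they account for the $\epsilon$ in part~1 while only helping in part~2. I measure every trajectory against two features of the input: the total length $G$ of the gaps that require sensors to be carried \emph{leftward} (the negative coverage-balance regions of Section~2), and the positions of those gaps.

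For part~1, fix an algorithm with switching point $z$; knowing $z$, the adversary plays one of two instances. If $z$ is large it uses \emph{gaps everywhere}, filling $[0,L]$ with gap-plus-stack units as in Theorem~\ref{thm:lb}: the robot is forced to cover every gap before $z$ by a triple (overhead $\approx 2z$) and then to close out $(z,L)$ with the mandatory final double back to $\approx z$, for a total of about $2L+z$, whereas the offline optimum is a single double of length $\approx 2L$, giving ratio $\approx 1+\tfrac{z}{2L}$. If $z$ is small it places \emph{one tiny gap just past $z$}: the robot must close it with the final double, which sweeps from $L-r$ all the way back to $\approx z$ at cost $\approx 2L-z$, while offline covers the single small gap by a negligible triple at cost $\approx L$, giving ratio $\approx 2-\tfrac{z}{L}$. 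Since $\max\!\big(1+\tfrac{z}{2L},\,2-\tfrac{z}{L}\big)\ge \tfrac43$ for every $z$, with equality only at $z=\tfrac{2L}{3}$, every fixed-switching-point algorithm has competitive ratio at least $\tfrac43-\epsilon$.

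For part~2 I would take the balancing value $z=\tfrac{2L}{3}$. The online cost then has two forms: if no leftward gap lies in $(z,L)$ it is $L+2G$ (triples plus a straight run to $L-r$, the terminal gap being filled from the left); if such a gap does lie in $(z,L)$, with leftmost one at $a>z$, the robot performs its triples (overhead $2G_1$, where $G_1$ is the gap length before $z$) and one double from $L-r$ to $a$, for cost $2L+2G_1-a$. The crux is a matching lower bound on the offline optimum, which I would obtain by a crossing-number argument: any covering trajectory ending at $e$ must cross each point of $[e,L]$ at least twice and each \emph{gap} point of $[0,e)$ at least three times (reach $L$, sweep left to deposit, continue right), whence $OPT\ge \min_{0\le e\le L}\big(2L-e+2\,G_{\le e}\big)$, with $G_{\le e}$ the gap length in $[0,e]$. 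Since all of $G_1$ lies in $[0,\tfrac{2L}{3}]$, the inequality $G_{\le e}\ge G_1-(z-e)$ forces $2L-e+2G_{\le e}\ge \tfrac{4L}{3}+G_1$ on $[0,z]$, and $\ge L+2G_1$ on $[z,L]$; combining these with $a>\tfrac{2L}{3}$ and $G_1\le \tfrac{2L}{3}$ yields $OPT\ge \tfrac34(2L+2G_1-a)$ in the double case and the analogous bound in the no-double case, so the online cost never exceeds $\tfrac43\,OPT$, with equality approached precisely on the two extremal instances of part~1.

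The step I expect to be the main obstacle is the offline lower bound, specifically making the crossing count rigorous for \emph{arbitrary} inputs rather than the clean stacked instances. One must argue that the only gaps contributing overhead are those requiring leftward transport, that an order-preserving optimal solution fills each from the right (Lemma~\ref{lm:op}), and that gaps fillable from the left add nothing to either the online or the offline cost and hence never disturb the ratio. Pinning down the finite-$r$ error terms so that part~2 gives a clean $\le \tfrac43$ rather than $\tfrac43+o(1)$ is the remaining bookkeeping.
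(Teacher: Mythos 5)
Your lower bound is essentially the paper's argument: a two-case adversary keyed to whether $z$ exceeds $2L/3$, punishing large $z$ with an instance full of gaps that forces overhead $\approx 2z$ in triples against an offline single double of cost $\approx 2L$, and punishing small $z$ with one gap placed just past $z$ that forces a long final double against an offline cost of $\approx L$; the paper's instances are slightly different in detail (one large stacked gap before $z$ rather than gaps everywhere, and an explicit gap of length $r/2$ just past $z$) but the balance at $z=2L/3$ and the resulting $\max\bigl(1+\tfrac{z}{2L},\,2-\tfrac{z}{L}\bigr)\ge\tfrac43$ calculation are the same. Where you genuinely diverge is the upper bound. The paper fixes $z=2L/3$ and disposes of the analysis in two terse cases --- ``OPT does at least one fewer triple'' giving $T_A/T_0\le(8L/3-8r)/(2L-3r)$, and ``OPT does at least one more triple'' giving $T_A/T_0\le(4L/3-r)/L$ --- with the supporting inequalities asserted rather than derived (in particular, the claim $T_0\ge 2L-3r$ in the first case holds only for the worst-case instance, not pointwise, so the cases are not cleanly correlated as written). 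You instead prove an explicit lower bound $OPT\ge\min_{0\le e\le L}\bigl(2L-e+2G_{\le e}\bigr)$ by counting crossings as a function of the trajectory's endpoint $e$, and I have checked that this yields $OPT\ge\tfrac{4L}{3}+G_1$ for $e\le z$ and $OPT\ge L+2G_1$ for $e\ge z$, which combined with your two closed forms for the online cost does give the $\tfrac43$ bound for every $G_1\le 2L/3$; your approach is more work but is uniform over all inputs and makes the worst case ($G_1=2L/3$, matching the paper's numbers) visible rather than assumed. The one real obligation you flag yourself --- that $G_{\le e}$ must count only the leftward-transport overhead and that rightward-fillable gaps cost neither party anything --- is exactly what the paper's Section~2 machinery (Lemmas~\ref{lm:op}, \ref{lm2}, \ref{lm:opt-tr} and the coverage-balance formalism) supplies, so citing those closes the gap; with that, your argument is correct.
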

\begin{proof}
Consider an online robot that has a fixed switching point $z$ such that $0 \leq z \leq L$.  Suppose $z \leq 2L/3$, and let $k = \lfloor (z+r)/2r \rfloor$. Consider an input consisting of $k$ attached sensors that clearly cover the barrier until $2kr$,  then the next sensor at position $2kr+3r/2$. Clearly, there is a gap of length $r/2$ between the $k^{th}$ and $(k+1)^{st}$ sensors. Next place a sensor at position $2kr+3r$ and sensors in attached position after this until almost the endpoint with the usual gap at the end of the barrier. Thus the only two gaps are between the $k^{th}$ and $(k+1)^{st}$ sensors, and the one at the end. However,  since $z < 2kr+r$, the online robot cannot turn back at the $(k+1)^{st}$ sensor to do a triple to cover the gap before that sensor, and must instead cover both gaps with a double. Therefore, the online robot's trajectory is of length  at least $2L-2kr-3r \geq 2L-z-4r$, while the optimal trajectory contains a triple to complete the gap after $z$, and has length at most $L$. Since  $z \leq 2L/3$, the competitive ratio of the algorithm is at least $4/3 - 4r/L > 4/3 - \epsilon$ for $4r/L < \epsilon$. 

Suppose instead that $z > 2L/3$ and let $k = \lceil z/2r \rceil$. Now consider an input that has a gap from the beginning and then a stack of $k$ sensors at $z$ that suffice to cover the gap before $z$, followed by the next sensor at $2kr + 3r/2$ and attached sensors after that until almost the endpoint of the barrier as before. Observe that there are three gaps: the gap before $z$ that can be solved by turning left at $z$ and doing a triple,  the gap between the $k^{th}$ and $(k+1)^{st}$ sensors after $z$, and the gap at the end (If $z$ is so close to the right endpoint that there is no room for two gaps after $z$, then it is easy to show that the competitive ratio is actually $\approx 3/2$.) Thus the online robot must turn at $z$ to cover the gap with a triple, and cover the remaining gaps with a double, with a trajectory of length at least $2L+z-8r$, while the optimal trajectory has length $2L-3r$. Since $z > 2L/3$, the competitive ratio of the algorithm is at least $4/3 - 4r/L$. By choosing $4r/L<\epsilon$ for any given $\epsilon$, we obtain a lower bound of $4/3 - \epsilon$ for the competitive ratio.

Consider now an online algorithm $A$ with a fixed switching point $2L/3$: it turns left at every potential left turning point before $2L/3$ to cover the preceding gap with a triple. After it passes $z$ however, it will only turn left at the end and do a double if necessary. We claim that $A$ has competitive ratio $4/3$. Suppose the optimal offline algorithm $O$ does at least one fewer triple than $A$. Then it is not difficult to see that 
\begin{equation*}
\frac{T_A}{T_0} \leq \frac{8L/3 - 8r}{2L-3r} < \frac{4}{3}.
\end{equation*}
 If instead the optimal algorithm does at least one more triple than $A$, then it is not difficult to see that
 \begin{equation*}
\frac{T_A}{T_0} \leq \frac{4L/3 - r}{L} < \frac{4}{3}. 
\end{equation*}

Thus in both cases the competitive ratio of the algorithm is at most $4/3$, 
matching the lower bound.
\end{proof}

Thus, by deciding in advance a switching point at which to stop doing triples, it is impossible to derive an online algorithm that matches the lower bound of Theorem~\ref{thm:lb}. We now specify {\bf AdaptiveOnline}, an online algorithm 
for  a robot which, when starting at location $0$, relocates sensors on 
the segment $[0,L]$ to achieve complete barrier coverage. We  calculate  an upper bound on the competitive ratio of this 
online algorithm 
and prove that it asymptotically matches the lower bound of $5/4$ from Theorem~\ref{thm:lb}.
Clearly, an online algorithm can calculate the coverage balance  of any 
sensor it encounters. We now describe two functions for the online robot 
used in the algorithm. 
The first function is  called  {\em walk-in-surplus} and is defined as 
follows: When at a potential left-turning point (or the start of the barrier) 
the robot moves right picking  up sensors having a positive 
coverage balance and 
deposits them $2r$ apart (as the optimal offline algorithm constructing a 
fully-stretched solution would do), until reaching a point $x$ such that the 
last sensor it dropped was at location $x-2r$, and no sensors were encountered 
in the interval $[x-2r, x]$.  Observe that at such a position $x$, 
the robot knows that $x$ is a potential right-turning point. 
The function then returns the value $x$. 
The second function is  called {\em walk-in-deficit}: When 
first time at a potential right-turning point, 
robot moves right  picking up sensors 
with negative coverage balance on its way until it reaches a sensor 
with negative coverage balance greater than $-2r$, or  
balance exactly $-2r$ and collocated with the next sensor. Thus, this 
is a potential left turning point $y$; the function then returns the value $y$. 
The functions a {\em triple}, and  
a {\em double} 
behave the same way as in the offline algorithm.
The main challenge for the online algorithm is to determine, when 
it reaches a potential left turning point, whether to do a triple at 
this point, or to switch to solving the remaining segment as part of the 
final double.  

We specify our adaptive online algorithm as a recursive procedure 
{\bf AdaptiveOnline(t,L,r)} in which $[t,L]$ is the  
subinterval on which the robot has not yet travelled, and barrier coverage 
remains to be achieved, and $r$ is the range of sensors. 
To calculate the coverage of $[0,L]$ we execute 
 {\bf AdaptiveOnline(0,L,r)}. We assume that there is a gap at position $0$; if not, we simply execute the walk-in-surplus function until reaching a potential right turning-point $x$ and then call {\bf AdaptiveOnline} on the segment $[x-r, L]$. It is clear that the initial part of the trajectory executed until $x$ is optimal, and cannot increase the competitive ratio on the entire input. We give the pseudocode for the algorithm below.
\begin{tabbing}
xx\=xx\=xxx\=xx\=xx\=xx\=xx\=xx\=xx\=xx\=xx\= \kill
Algorithm  {\bf AdaptiveOnline} $(t,L,r)$;\\
\line(1,0){440}\\
\> {\bf Input:}
$t$, $L$, the subinterval being solved,  with a gap at $t$ and
 $r$ is the range of sensors\\
\> {\bf Output:} \>\>\>\>the moves of the robot;\\
\> {\bf Variables:}: \\
\>\>$x$; // the current position\\
\>\>$T$; // current trajectory length\\
\>\> $\gamma_i$; // ratio trajectory/distance at left-turning point in iteration $i$\\
\>\> $\beta_i$; // ratio trajectory/distance at right-turning point in iteration $i$\\
\>{\bf functions:} walk-in-surplus, walk in deficit, triple and double\\
\line(1,0){440}\\
 \> $x \la t+ r$;  $T \la 0$;  $i \la 0$; // initialization of variables\\
\>{\bf repeat} \\
\> \> $i \la i+ 1$; // iteration of loop\\
\>\>$b_i \leftarrow x$; // potential right-turning point \\
\>\>$\beta_i \la  (T+r)/b_i$ $\;$// ratio at start of possible triple\\
\>\>$a_i \la $ walk-in-deficit; // potential left-turning point \\
\>\>$T \la T+r + 3(a_i-b_i)$; // trajectory if triple is done \\
\>\>$\gamma_i \la T/a_i$; // ratio at end of possible triple\\
\>\>{\bf if}  $\gamma_i a_i -a_i >  L - t$ {\bf break} $\;$// exit the loop\\   
\>\> {\bf else} \\
\>\>\>do a triple  on segment $[b_i,a_i]$,\\
\>\>\> $x \la $walk-in-surplus; // gap starting at $x-r$ \\
\>\>\> $T \la T + (x-r-a_i)$; //update trajectory until start of gap\\
\> \> \> {\bf If} $x < L$ and $T/(x-r) \leq 2.5$ {\bf then}  AdaptiveOnLine(x-r,L,r);\\
\>{\bf until} $x=L$;\\
\> {\bf if} ($L$ not reached) {\bf then}\\ 
\>\>\> {\bf do} a double  (to $L-r$ and back to $b_i$); \\
\>\>\> $T \la T + (L-a_i)-(a_i-b_i)$;
\end{tabbing}

 The key idea is as follows: First, the online robot keeps track of  the ratio between its trajectory so far versus the distance it has covered. If it discovers that this ratio is less than $5/2$, then it "forgets about" the segment covered so far (it will be shown that it has achieved a competitive ratio of at most $5/4$ for this part), and restarts its computations. 
The ratio between its trajectory and distance travelled so far is computed only at potential left and right turning points. 
 Secondly, when it reaches a potential left-turning point, the online robot 
calculates the cost of the triple: the difference between its trajectory if it executes the triple and the distance covered so far. If this difference is too high, it decides to stop doing triples, and finish by doing a double.

Observe that before making a recursive call, at least one gap is covered by the robot. Since the number of gaps is finite, 
the algorithm terminates. It is also clear that {\bf AdaptiveOnLine} 
constructs 
a trajectory that results in barrier coverage. It remains only to analyze the competitive ratio of the trajectory length.  Let $T_A(I)$ and $T_o(I)$ be the lengths of the trajectories of the algorithm
{\bf AdaptiveOnline} and the optimal offline algorithm on an input instance $I$ respectively. We  prove a bound of $5/4$ on $max_I \{ T_A(I)/T_o(I) \}$, 
thereby matching the lower bound of Theorem~\ref{thm:lb}. 

Fix an input instance $I$. Observe that the algorithm  {\bf AdaptiveOnline} 
partitions the segment $[0,L]$ into sub-segments that are solved in each 
recursive call of the algorithm. We call each of these sub-segments an 
{\em epoch}; let $n$ be the number of epochs, such that while traversing 
epoch $j$, there is no recursive call. 
Let $T_j$ be the length of the  the trajectory of the online robot in epoch 
$j$, and let $O_j$ be the length of the trajectory of the optimal offline 
robot in the same epoch. Every epoch starts with a gap, and in every epoch 
except possibly the last, the mobile robot does triples from the first 
encountered left-turning point to cover gaps. 
\begin{lemma} \label{lemma:early-epochs}
$T_j/O_j \leq 5/4$ for $1 \leq j \leq n-1$
\end{lemma}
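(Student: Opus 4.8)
The plan is to prove the stated per-epoch inequality directly; this is the key ingredient because the epochs partition $[0,L]$, so $\sum_j T_j = T_A(I)$ and $\sum_j O_j = T_o(I)$, and a uniform bound of $5/4$ on the non-final epochs (together with a companion bound for the last epoch) will yield the global ratio. Fix a non-final epoch $j$, let $\ell_j$ be the length of its segment, and let $g_j$ be the total width of the triples the online robot performs in it. Since in every non-final epoch the robot does only triples and no double, its net forward progress across the epoch is $\ell_j$ while each triple of width $w$ adds exactly $2w$ of backtracking; hence
\[
T_j = \ell_j + 2 g_j,
\]
where the additive $O(r)$ terms coming from the ``$+r$'' bookkeeping in the pseudocode are lower order and can be absorbed.

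First I would extract the consequence of the epoch-termination rule. Epoch $j$ is closed off precisely at the first turning point at which the running ratio of trajectory length to distance covered in the epoch falls to at most $5/2$, which is exactly the test \emph{$T/(x-r)\le 2.5$} that triggers the recursive call. Evaluating this at the end of the epoch gives $T_j \le \tfrac52 \ell_j$, equivalently $g_j \le \tfrac34 \ell_j$. This is the only place the threshold $5/2$ enters, and it is what forces the final constant to be $5/4$.

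The heart of the argument is a lower bound on $O_j$, the length of the optimal offline trajectory inside epoch $j$'s segment, and this is the step I expect to be the main obstacle. I would argue that the offline robot has essentially two ways to deal with the gaps of epoch $j$. Because the barrier ends in a gap, the offline robot must cross the whole segment rightward at least once, contributing $\ell_j$; to fill the gaps it must, by the order-preserving fully-stretched structure of Lemma~\ref{lm:op} and the shape given by Corollary~\ref{cor1}, shift sensors left, forcing a leftward pass over each gap region. Since by Corollary~\ref{cor1} an optimal trajectory contains at most one double and it must be the terminal one, and since epoch $j$ is not the last epoch (there are gaps strictly to its right, in particular the gap at $L$), the single admissible double either sweeps through the whole of epoch $j$, traversing its segment twice at cost $2\ell_j$, or it does not reach back into epoch $j$ at all, in which case every gap of epoch $j$ must be resolved by local triples at cost at least $\ell_j + 2g_j$. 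Formalizing this dichotomy, namely that no offline strategy restricted to the segment can beat $\min(\ell_j+2g_j,\,2\ell_j)$, is delicate: one must rule out cleverer mixed maneuvers and handle the case where the terminal double turns around inside the epoch. The tools for this are Lemma~\ref{lm:op}, Lemma~\ref{lm:opt-tr} and Corollary~\ref{cor1}, which pin the optimal trajectory down to triples plus one terminal double, so that the cost charged to epoch $j$ is controlled by $\ell_j$ and $g_j$ alone.

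Finally I would combine the two estimates. If the offline optimum covers epoch $j$ by local triples then $O_j = \ell_j + 2g_j = T_j$ and the ratio is $1$. Otherwise $O_j = 2\ell_j$, and then, using $T_j \le \tfrac52 \ell_j$ from the termination rule,
\[
\frac{T_j}{O_j} = \frac{\ell_j + 2g_j}{2\ell_j} = \frac{T_j}{2\ell_j} \le \frac{\tfrac52\ell_j}{2\ell_j} = \frac54.
\]
In either case $T_j/O_j \le 5/4$, which is the claim.
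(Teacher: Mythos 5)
Your proposal follows essentially the same argument as the paper's proof: the same dichotomy (the offline optimum either performs the epoch's triples itself, giving ratio $1$, or handles the epoch via the terminal double, giving $O_j \geq 2\ell_j$), combined with the bound $T_j \leq 2.5\,\ell_j$ coming from the recursion threshold, to conclude $T_j/O_j \leq 5/4$. The extra bookkeeping via $g_j$ and your flagged concern about a double that turns around inside the epoch are refinements of details the paper's own proof treats in one line, not a different route.
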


\begin{proof}
Let the length of epoch $j$ be $\ell_j$. Since the epoch starts with a gap, the optimal algorithm either does the same thing as the mobile robot in the epoch, in which case $T_j/O_j = 1$ or it does part of the epoch using a double in which case $T_j/O_j \leq  \frac{T_j}{2 \ell_j}$. Since in each of the epochs $j$ for $1 \leq j \leq n-1$, the recursive call is made only when $T_j/\ell_j  \leq 2.5$, we have $T_j \leq 2.5 \ell_j $. The lemma follows. 
\end{proof}

\begin{lemma} \label{lemma:same-difference}
The difference between the trajectory length and the distance covered on the segment stays the same  while executing {\em walk-in-surplus} or {\em walk-in-deficit}. 
\end{lemma}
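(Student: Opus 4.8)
The plan is to exploit the single structural fact that both functions move the robot monotonically to the right, with no intervening left turn. First I would recall the definitions: in \emph{walk-in-surplus} the robot ``moves right'' dropping surplus sensors $2r$ apart until it reaches the point $x$ it returns, and in \emph{walk-in-deficit} the robot ``moves right'' collecting deficit sensors until it reaches the point $y$ it returns. In neither description does the robot ever reverse direction, so throughout the execution of either function the robot's motion is purely rightward.

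Next I would set up the bookkeeping. Suppose that when the function is invoked the robot sits at position $p$ having accumulated trajectory length $T_p$, and that when the function returns it sits at position $q \ge p$ with trajectory length $T_q$. Because the motion is monotone rightward, the trajectory length increases by exactly the displacement, $T_q - T_p = q - p$. The distance covered along the barrier also increases by exactly $q - p$, since the rightmost point reached advances from $p$ to $q$ and never retreats. Subtracting, the difference between trajectory length and distance covered at the end, $T_q - q$, equals the difference at the start, $T_p - p$. Since the same computation applies to any intermediate instant of the walk, the difference is constant throughout, which is precisely the claim.

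The only point needing a word of justification is that picking up and dropping sensors contributes nothing extra to the trajectory: by the relocation model the robot carries all the sensors it needs and deposits a sensor at its current location as it passes, so no detour is ever forced, and every unit of forward progress adds one unit to both the trajectory length and the distance covered. I expect this to be the main (indeed the only) obstacle, namely verifying that neither function ever induces a backtrack; once that is granted, the equality of the two increments, and hence the invariance of their difference, is immediate.
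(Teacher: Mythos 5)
Your argument is exactly the paper's: both functions move the robot monotonically rightward from $z$ to $z'$, so the trajectory length and the distance covered each increase by $z'-z$, and the paper's one-line computation $T'-z'=(T+z'-z)-z'=T-z$ is precisely your bookkeeping step. The proposal is correct and takes essentially the same approach, with the added (harmless) remark that picking up and dropping sensors induces no detour.
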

\begin{proof}
Let $T$ be the trajectory length at point $z$ just before executing walk-in-surplus (or walk-in-deficit) and let $T'$ be the trajectory at point $z'$ at the end of its execution. Then $T' - z' = (T + z'-z) - z' = T-z$.
\end{proof}

\begin{lemma} \label{lemma:last-epoch}
$T_n/O_n \leq 5/4$
\end{lemma}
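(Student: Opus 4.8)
The plan is to prove the statement by treating the last epoch as a self-contained instance on a subinterval $[t,L]$ that begins with a gap at its left end $t$ and ends with the forced gap at $L$, and to bound the online-to-offline ratio on this instance alone. (The epoch-wise bound then combines: since every epoch starts with a gap, any global trajectory restricted to an epoch must itself cover that epoch's gaps, so $T_o(I)\ge\sum_j O_j$, and with Lemma~\ref{lemma:early-epochs} the mediant inequality gives $T_A(I)/T_o(I)\le\max_j T_j/O_j$.) First I would write $T_n$ in closed form. In the last epoch the robot performs some number $p\ge 0$ of triples $[b_1,a_1],\dots,[b_p,a_p]$ and then, when the break condition fires, abandons the $(p{+}1)$-st triple and finishes with a double from $b=b_{p+1}$ out to $L-r$ and back. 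By Lemma~\ref{lemma:same-difference} the walk-in-surplus and walk-in-deficit portions add their length exactly once and contribute no backtracking, so, writing $S=\sum_{i=1}^{p}(a_i-b_i)$ for the total triple width, the online cost is $T_n = 2L-2r-b-t+2S$: a net forward pass to $L-r$, the doubled tail $2(L-r-b)$, and an overhead of $2(a_i-b_i)$ per triple.

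Next I would invoke the offline structure. By Corollary~\ref{cor1} and Lemma~\ref{lm:tpt}, an optimal offline trajectory on $[t,L]$ performs triples on a \emph{prefix} of the potential triple delimiters and then finishes with a double or a straight segment; moreover both robots see the same delimiter list, so their triples agree on the common prefix. Thus $O_n=\min_q O(q)$ over the prefix length $q$, where $O(q)=2L-2r-b_{q+1}-t+2S_q$ for a strategy that triples $1,\dots,q$ and doubles from $b_{q+1}$. Crucially, the online trajectory is itself the member $q=p$, so $O_n\le T_n$ automatically and the content of the lemma is the reverse estimate $T_n=O(p)\le\tfrac54\,O(q)$ for \emph{every} feasible $q$. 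I would split on whether the optimal offline does fewer or more triples than the online robot, exactly mirroring the two lines of the upper-bound argument in Theorem~\ref{thm:fixed-switching-point}.

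For the two directions I would use the two levers the algorithm supplies. The break condition at iteration $p{+}1$ states $T-a_{p+1}>L-t$, which after substitution reads $2S-2b+2a_{p+1}>L$; combined with disjointness of the triples inside $[t,b]$ it yields the two structural inequalities $2S\le L-t$ (the last successful triple passed the threshold) and $S\le b-t$. In the \emph{more-triples} case the relevant comparator is $O_2$, the strategy that copies the $p$ triples, additionally covers the skipped gap with a triple, and walks straight to $L-r$; the break inequality gives $O_2=(2S-2b+2a_{p+1})+L-r-t>2L-r-t$, and in fact shows $O_2\ge O(0)$, so the single-double strategy is the smaller comparator and the extra triple cannot help the offline by more than the claimed factor. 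In the \emph{fewer-triples} case I would compare against $O(0)=2(L-t)-3r$ and bound $T_n/O(0)$ using $2S\le L-t$ together with $S\le b-t$; the calculation should pin the ratio at $\tfrac54$ in the worst configuration, with the single-double and the one-extra-triple strategies tying — precisely the balance point from which both $5/4$ lower bounds of Theorem~\ref{thm:lb} arise.

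The step I expect to be the main obstacle is reconciling the online robot's \emph{global} overhead threshold $L-t$ with the offline optimum's \emph{local} marginal criterion $O(q{+}1)-O(q)=2a_{q+1}-b_{q+1}-b_{q+2}$: the online robot keeps making triples while cumulative overhead stays below $L-t$, whereas the offline optimum stops when a triple stops paying for itself, and showing that these two different stopping rules never differ in cost by more than $\tfrac54$ is the real heart of the argument. A secondary nuisance is that exact bookkeeping through the broken iteration (the pseudocode's $+r+3(a_i-b_i)$ update and the subsequent double correction) produces lower-order $O(r)$ discrepancies between $T_n$ and the comparator costs; I would note that these contribute only an additive $O(r/L)$ to the ratio, which vanishes in the parameter regime $r=L^{1/3}$ used for the matching lower bound, so the asymptotic bound of $\tfrac54$ is attained.
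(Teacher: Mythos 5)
Your plan is essentially the paper's own proof under a change of variables: your two structural inequalities $2S\le L-t$ (the loop-exit test passed at the last completed iteration, which is the paper's Equation~(\ref{eqn:prev}) since $\gamma_i a_i-a_i-2d=2S$) and $S\le b-t$ (the paper's $\gamma_i\le 3$, i.e.\ the three-visits bound) combine to give exactly the paper's Claim~\ref{extra} in the form $2S-(b-t)\le (L-t)/2$, hence $T_n\le\tfrac{5}{2}(L-t)$, and your fewer-triples/more-triples case split with comparators $O(0)$ and $O_2$ (the latter bounded below by $2(L-t)-O(r)$ via the break condition) is precisely the paper's concluding step. The ``reconciliation of the two stopping rules'' that you flag as the main obstacle is already resolved by those two inequalities, so no further idea is needed, and the retreat to a $\tfrac{5}{4}+O(r/L)$ bound is unnecessary --- the paper absorbs the lower-order terms by bounding $O_n\ge 2(L-t)$ directly.
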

\begin{proof}
In epoch $n$, there are no recursive calls. For simplifying the analysis, we assume $t=0$ and $L=1$. Assume the loop was exited in iteration $i$. Observe that the loop can be exited either because $x=1$ or because $T-a_i> 1$. 

We first consider the case when the loop is exited because $x=1$. In this case, we know that $\gamma_i a_i - a_i \leq 1$ and the online robot walked in surplus after this till the end. Thus, the optimal offline algorithm could not have done more triples than the online robot. Consider an optimal offline algorithm that does fewer triples. Then 
\begin{equation}
\frac{T_n}{O_n} \leq \frac{T_n}{2} = \frac{ \gamma_i a_i + (1 - a_i)}{2} \leq \frac{2}{2}= 1
\end{equation}
Thus $T_n/O_n = 1$ in this case. 

Next we consider the case when the loop is exited in iteration $i$ because $T-a_i  > 1$. If $i=1$, since $T= r + 3(a_i-r)$, this implies $2a_i - 2r > 1$. Observe that the online robot does no triples at all, and does only a double and has a trajectory of length $2-r$. An algorithm that does the triple in the segment $[r, a_i]$ has length at least $r + 3(a_i-r) + 1-a_i = 1+ 2a_i -2r > 2$ which means it is sub-optimal. Therefore, in this case again $T_n/O_n = 1$. 

Suppose instead the loop is exited in iteration $i >1$ because $T-a_i = \gamma_i a_i - a_i > 1$.   Denote by $d$, the distance between $a$ and $b$, that is, $d=a-b$. Then 
\begin{equation}
\beta_i = \frac{ \gamma_i a_i - 3d}{a_i -d}
 \label{eqn:beta}\end{equation}
Next observe that since at the end of iteration $i-1$, the value of $T/(x-r) > 2.5$,  we obtain $3 \geq \beta_i = (T+ r)/x > 2.5$. It follows now from Equation~\ref{eqn:beta} that 
\begin{equation}
\gamma_i > 2.5
\label{eqn:alpha}
\end{equation}
Since the loop was not exited in iteration $i-1$, it must be that $\gamma_{i-1}a_{i-1} - a_{i-1} \leq 1$ and since the only moves between $a_{i-1}$ and $b_i$ are achieved by walk-in-surplus and walk-in-deficit, it follows from Lemma~\ref{lemma:same-difference} that $\beta_i b_i - b_i  = \gamma_{i-1}a_{i-1} - a_{i-1} \leq 1$. Substituting $b_i = a_i-d$, and using Equation~\ref{eqn:beta}, we obtain:
\begin{equation} 
\gamma_i a_i - a_i - 2d \leq 1.
\label{eqn:prev}
\end{equation}
Finally, recall that  
\begin{equation} 
\gamma_i a_i - a_i > 1
\label{eqn:curr}\end{equation}
For reasons that will be clear later, we now prove the following claim:
\begin{claim} \label{extra}
$\gamma_i a_i - 2a_i - d \leq 0.5$
\end{claim}
\begin{proof}
Suppose $a_i \leq 0.5 + d$. Then since $\gamma_i \leq 3$, we conclude that
$\gamma_i a_i - 2 a_i - d \leq  3a_i - 2a_i - d = a_i -d \leq 0.5$. If instead that $a_i > 0.5 + d$, then it follows from Equation~\ref{eqn:prev} that
$\gamma_i a_i - 2a_i - d < 1 - (a_i - d) < 0.5$ 
\end{proof}

Now consider the value of $T_n$. We have 
\begin{equation}
T_n = \gamma_i a_i + 2 (1-a_i) - d = 2 + \gamma_i a_i - 2a_i - d
\end{equation} 

Suppose the optimal algorithm did fewer triples than {\bf AdaptiveOnline}. Then 
\begin{equation}
\frac{T_n}{O_n} \leq \frac{T_n}{2} = \frac{2 + \gamma_i a_i -2a_i - d}{2} \leq \frac{5}{4}
\end{equation}
where the last inequality follows from Claim~\ref{extra}.
If instead the optimal algorithm did more triples that {\bf AdaptiveOnline}, its trajectory is of length at least 
$\gamma_i a_i + 1 - a_i$. Therefore
\begin{equation}
\frac{T_n}{O_n} \leq \frac{T_n}{\gamma_i a_i + 1 - a_i} \leq \frac{T_n}{2} \leq \frac{5}{4}
\end{equation}
where the second inequality uses~(\ref{eqn:curr}).
\end{proof}
Lemmas, \ref{lemma:early-epochs}, \ref{lemma:last-epoch} 
show that in each epoch the competitive ratio is at most $5/4$. Thus we get 
the following theorem.
\begin{theorem} 
{\bf AdaptiveOnline} is an online algorithm for barrier coverage of a line segment of known length and has competitive ratio at most  $5/4$, and is therefore optimal. 
\end{theorem}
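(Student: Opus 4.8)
The plan is to assemble the global bound from the three preceding lemmas by summing over epochs, so the heart of the argument is a clean additive decomposition of both the online and the optimal offline trajectories along the epoch boundaries. First I would recall that, by construction, the recursive calls of \textbf{AdaptiveOnline} partition $[0,L]$ into the $n$ epochs, and that the online robot's total trajectory length satisfies $T_A(I) = \sum_{j=1}^n T_j$ simply because the robot traverses the epochs one after another and each epoch accounts for a disjoint portion of its path. This is the easy half of the decomposition.

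The substantive step is to show $T_o(I) \geq \sum_{j=1}^n O_j$, where each $O_j$ is the optimal cost of solving epoch $j$ as an independent subproblem, which is the quantity against which Lemmas~\ref{lemma:early-epochs} and~\ref{lemma:last-epoch} compare $T_j$. I would argue this in two moves. First, every epoch boundary is produced by a recursive call at a point $x-r$ where \emph{walk-in-surplus} has just detected a gap; hence each boundary sits inside a genuine gap of the initial configuration. Second, by Lemma~\ref{lm:op} we may take the optimal offline trajectory to be order-preserving and fully stretched. In such a solution no sensor is ever carried across a boundary gap: the sensors initially to the left stay left and cover up to that gap, and those to its right stay right. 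Consequently the optimal trajectory, restricted to the portion of the barrier lying in epoch $j$, is itself a valid coverage solution for epoch $j$'s subproblem, so its length within epoch $j$ is at least $O_j$. Since path length is additive over any partition of the line into subintervals, summing these per-epoch lengths recovers exactly $T_o(I)$, yielding $T_o(I) \geq \sum_j O_j$.

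Combining the two decompositions with the per-epoch bounds gives the result directly:
\[
T_A(I) = \sum_{j=1}^n T_j \leq \frac{5}{4}\sum_{j=1}^n O_j \leq \frac{5}{4}\, T_o(I),
\]
where the middle inequality applies Lemma~\ref{lemma:early-epochs} to epochs $1,\ldots,n-1$ and Lemma~\ref{lemma:last-epoch} to the final epoch. As this holds for every instance $I$, the competitive ratio is at most $5/4$. Optimality then follows immediately from the lower bound of $5/4-\epsilon$ established in Theorem~\ref{thm:lb}, once we also note (as already observed when the algorithm was presented) that \textbf{AdaptiveOnline} terminates and its output restores full barrier coverage.

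I expect the main obstacle to be rigorously justifying the inequality $T_o(I) \geq \sum_j O_j$ --- specifically, verifying that an epoch boundary really does decouple the two sides of the problem. The delicate point is that $O_j$ must be the optimum of the epoch regarded in isolation, and one must be sure that restricting a \emph{global} optimum to an epoch cannot leak sensors across the boundary and cannot beat the isolated optimum; this is exactly where the gap-at-each-boundary property and the order-preserving normal form of Lemma~\ref{lm:op} do the work. The per-epoch inequalities themselves are already in hand, so no further case analysis of the algorithm's behaviour should be needed.
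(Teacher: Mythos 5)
Your proof is correct and follows essentially the same route as the paper, which likewise concludes directly from Lemmas~\ref{lemma:early-epochs} and~\ref{lemma:last-epoch} that the per-epoch ratios of at most $5/4$ combine (via $\sum_j T_j / \sum_j O_j \leq \max_j T_j/O_j$) into the global bound, with optimality from Theorem~\ref{thm:lb}. The one difference is that the paper defines $O_j$ as the length of the global optimal trajectory restricted to epoch $j$, so that $T_o(I)=\sum_j O_j$ holds by additivity of path length and no decoupling step is needed; your reinterpretation of $O_j$ as the optimum of the isolated epoch subproblem also works but, as you correctly flag, then requires the gap-at-each-boundary argument to establish $T_o(I)\geq\sum_j O_j$.
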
    
\bibliography{refs}
\bibliographystyle{plain}
\end{document}